\pgfplotsset{compat=1.18}
\DeclareMathOperator{\vspan}{span}
\DeclareMathOperator{\diag}{\mathrm{diag}}
\newcounter{experiment}[section]
\newtheorem{proposition}{Proposition}
\newtheorem{definition}{Definition}
\newtheorem{remark}{Remark}
\newtheorem{lemma}{Lemma}
\newtheorem{theorem}{Theorem}
\newtheorem{corollary}{Corollary}
\title{\LARGE \bf On Dynamic Mode Decomposition of Control-affine Systems}
\author{Moad Abudia, Joel A. Rosenfeld, and Rushikesh Kamalapurkar
    \thanks{This research was supported by the Air Force Office of Scientific Research (AFOSR) under contract numbers FA9550-20-1-0127 and FA9550-21-1-0134, and the National Science Foundation (NSF) under awards  2027976 and 2027999. Any opinions, findings and conclusions or recommendations expressed in this material are those of the author(s) and do not necessarily reflect the views of the sponsoring agencies.}
    \thanks{M. Abudia is with the School of Mechanical and Aerospace Engineering, Oklahoma State University, Stillwater, OK, USA, email: {\tt\small moad.abudia@okstate.edu}}%
    \thanks{J. A. Rosenfeld is with the Department of Mathematics and Statistics, University of South Florida, Tampa, FL, USA, email: {\tt\small rosenfeldj@usf.edu}}%
    \thanks{R. Kamalapurkar is with the Department of Mechanical and Aerospace Engineering, University of Florida, Gainesville, FL, USA, email: {\tt\small rkamalapurkar@ufl.edu}}%
}
\begin{document}
\maketitle
\thispagestyle{empty}
\pagestyle{empty}

\begin{abstract}

    This paper builds on the theoretical foundations for dynamic mode decomposition (DMD) of control-affine dynamical systems by leveraging the theory of vector-valued reproducing kernel Hilbert spaces (RKHSs). Specifically, control Liouville operators and control occupation kernels are used to separate the drift dynamics from the input dynamics. A provably convergent finite-rank estimation of a compact control Liouville operator is obtained, provided sufficiently rich data are available. A matrix representation of the finite-rank operator is used to construct a data-driven representation of its singular values, left singular functions, and right singular functions. The singular value decomposition is used to generate a data-driven model of the control-affine nonlinear system. The developed method generates a model that can be used to predict the trajectories of the system in response to any admissible control input. Numerical experiments are included to demonstrate the efficacy of the developed technique.

\end{abstract}

\section{Introduction}
Dynamic mode decomposition (DMD) is a data analysis method that aims to generate a finite-rank representation of a transfer operator corresponding to a nonlinear dynamical system using time series measurements \cite{SCC.Kutz.Brunton.ea2016,SCC.Budisic.Mohr.ea2012,SCC.Mezic2005,SCC.Korda.Mezic2018}. The time series is expressed as a linear combination of the dynamic modes. The coefficients in the linear combination are given by exponential functions of time. The dynamic modes and the growth rates of the exponential functions are derived from the spectrum of a finite-rank representation of the Koopman operator (or, in the continuous-time case, Koopman generator). In \cite{SCC.Korda.Mezic2018}, it was established that the finite rank representation of the Koopman operator converges, in the strong operator topology (SOT), to the true Koopman operator as the amount of data increases. However, convergence of the spectrum doesn't necessarily follow from convergence in the SOT \cite{SCC.Pedersen2012}. Since the DMD procedure relies on the spectrum to construct a model, the constructed model is not guaranteed to converge to the true dynamical system model that is being identified.

Kernel methods developed by the machine learning community have been adopted for system identification purposes by the control community \cite{pillonetto2014kernel,pillonetto2010new,pillonetto2011new}. The complexity of kernel methods is further investigated in the system identification context in \cite{pillonetto2015tuning}, where it shows that tuning of the hyperparameters is a problem that still persists today.

In \cite{SCC.Abudia.Rosenfeld.ea2024}, Liouville operators (or Koopman generators) were used instead of the Koopman operator, where examples of RKHSs and dynamical systems for which  the  Liouville operators are compact are provided. Furthermore, a finite-rank estimation of the compact Liouville operator is shown to converge in norm to the true Liouville operator. However, \cite{SCC.Abudia.Rosenfeld.ea2024} does not consider controlled systems, which is the focus of this paper.

In \cite{SCC.Proctor.Brunton.ea2016} a DMD routine to represent a general nonlinear system with control as a control-affine linear system. This idea is generalized in \cite{SCC.Korda.Mezic2018a} with extended DMD (eDMD), which provides good predictions but with no spectral convergence guarantees. Additionally, for a general discrete-time, nonlinear dynamical system with control, \cite{SCC.Korda.Mezic2018a} utilizes the shift operator to describe the time evolution of the control signal. Also, in discrete-time, separation of the control input from the state can be achieved via first order approximations \cite{STRASSER20232257}. For continuous-time dynamical systems, the Koopman canonical transform (see \cite{SCC.Surana2016}) is used in \cite{SCC.Goswami.Paley2022} to lift the nonlinear dynamical system and approximate it as a control-affine, bilinear system, called the Koopman bilinear form (KBF). The KBF is then amenable to the design of feedback laws using techniques from optimal control.

More recently, in \cite{strasser2024koopman} a lifted bilinear representation of the nonlinear system is identified and is used to develop a feedback controller that is guaranteed to stabilize the system given sufficiently rich data and an appropriate region of attraction. However, given the condition that the data must be obtained from a system that is activated using every basis of the control space.

In this paper, we present a method to identify a control-affine nonlinear system via a finite-rank operator, which converges (in norm) to the true compact control Liouville operator provided sufficiently rich data are available. This work uses control occupation kernels,  which were introduced in \cite{SCC.Rosenfeld.Kamalapurkar2021}, augmentations of occupation kernels, that incorporate control signal information by leveraging vector valued reproducing kernel Hilbert spaces (vvRKHS). Occupation kernels themselves were introduced in \cite{SCC.Rosenfeld.Russo.ea2024}, where system identification problems are addressed not through numerical differentiation, but rather through integration. Using numerical integration is considerably less sensitive to noise \cite{SCC.Abudia.Channagiri.eainprep} as opposed to numerical differentiation used in popular system identification methods such as SINDYc \cite{SCC.Brunton.Proctor.ea2016}.


The main contribution of this paper is the development of a DMD routine for nonlinear control-affine dynamical systems using a compact control Liouville operator. The finite-rank approximation is provably convergent, in norm, to the true control Liouville operator on Bargmann-Fock spaces restricted to the set of real numbers.

\section{Preliminaries} \label{sec:Preliminaries}
Given a Hilbert space $\mathcal{Y}$ and a set $X$, a vector valued RKHS (vvRKHS), $H$, is a Hilbert space of functions from $X$ to $\mathcal{Y}$, such that for each $v\in \mathcal{Y}$ and $x \in X$, the functional $f \mapsto \langle f(x), v \rangle_{\mathcal{Y}}$ is bounded. Hence for each $x \in X$ and $v \in \mathcal{Y}$, there is a function $K_{x, v} \in H$ such that $\langle f(x), v \rangle_{\mathcal{Y}} = \langle f, K_{x,v} \rangle_{H}$. The mapping $v \mapsto K_{x,v}$ is linear over $\mathcal{Y}$; hence, $K_{x,v}$ may be expressed as an operator over $\mathcal{Y}$ as $K_x v := K_{x,v}$. The operator $K(x,y) := K_y^*K_x$ is called the reproducing kernel operator corresponding to $H$.

In the present context, we define  $\mathcal{Y} = \mathbb{R}^{m+1}$ (viewed as row vectors), $X = \mathbb{R}^n$, and $H$ consists of continuously differentiable functions. Since $\mathcal{Y}$ is the space of row-vectors, the operation on $v \in \mathbb{R}^{m+1}$ by the kernel operator $K_x$ will be expressed as $v K_{x}$. Given two continuous signals, $\theta : [0,T] \to \mathbb{R}^n$ and $u : [0,T] \to \mathbb{R}^m$, the control occupation kernel corresponding to this pair of signals is the unique function, $\Gamma_{\theta,u} \in H$, that represents the bounded functional $h \mapsto \int_{0}^T h(\theta(t)) \begin{pmatrix} 1 \\ u(t) \end{pmatrix} dt$ as $\langle h, \Gamma_{\theta,u}\rangle_H=\int_{0}^T h(\theta(t)) \begin{pmatrix} 1 \\ u(t) \end{pmatrix} dt$. 
via the following proposition. 

\begin{proposition}\label{prop:ftc}
\cite{SCC.Abudia.Channagiri.eainprep} If $H$ is a vvRKHS consisting of $\mathbb{R}^{m+1}$ (row) valued continuous functions over $\mathbb{R}^n$, $\gamma_u: [0,T] \to \mathbb{R}^n$ is a controlled trajectory with continuous controller $u:[0,T]\to\mathbb{R}^m$ satisfying, for all $t\in[0,T]$, $\frac{d^s\gamma_{u}}{dt^s} (t) = f(\gamma_{u}(t)) + g(\gamma_{u}(t))u(t)$, and $\begin{pmatrix} (f)_{j} & (g)_{j} \end{pmatrix} \in H$ for each $j=1,\ldots,n$, then
$
    \langle \begin{pmatrix} (f)_{j} & (g)_{j} \end{pmatrix}, \Gamma^{(s)}_{\theta_u, u} \rangle_H = \left(\gamma_{u}(T) - \sum_{\ell = 0}^{s-1} \frac{T^\ell}{\ell!}\frac{d^{\ell}\gamma_{u}}{dt^{\ell}}(0)\right)_j.
$
\end{proposition}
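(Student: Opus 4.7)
The plan is to unpack the notation and reduce the claim to Taylor's theorem with integral remainder. Although $\Gamma^{(s)}_{\theta_u,u}$ is not spelled out in the excerpt, the superscript and the form of the identity strongly suggest that it is the Riesz representer in $H$ of the bounded linear functional
$$
h \mapsto \int_0^T \frac{(T-t)^{s-1}}{(s-1)!}\, h(\gamma_u(t)) \begin{pmatrix} 1 \\ u(t) \end{pmatrix} dt,
$$
by direct analogy with the $s=1$ control occupation kernel $\Gamma_{\theta,u}$ defined just before the proposition statement. Boundedness of this functional on $H$ follows from continuity of vector-valued point evaluation in the vvRKHS, continuity of $\gamma_u$ and $u$, and compactness of $[0,T]$, so that $\Gamma^{(s)}_{\theta_u,u}$ indeed exists as an element of $H$.

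Taking this as the definition, I would first apply the defining reproducing-style property to $h = \begin{pmatrix}(f)_j & (g)_j\end{pmatrix}$, which gives
$$
\langle \begin{pmatrix}(f)_j & (g)_j\end{pmatrix}, \Gamma^{(s)}_{\theta_u,u}\rangle_H = \int_0^T \frac{(T-t)^{s-1}}{(s-1)!}\left[(f)_j(\gamma_u(t)) + (g)_j(\gamma_u(t))\, u(t)\right] dt.
$$
Invoking the hypothesized $s$-th order dynamics, the bracketed integrand becomes $\frac{d^s(\gamma_u)_j}{dt^s}(t)$, so the proposition reduces to the purely analytic identity
$$
\int_0^T \frac{(T-t)^{s-1}}{(s-1)!}\, \phi^{(s)}(t)\, dt = \phi(T) - \sum_{\ell=0}^{s-1} \frac{T^\ell}{\ell!}\, \phi^{(\ell)}(0),
$$
applied componentwise with $\phi = (\gamma_u)_j$.

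This last identity is Taylor's theorem with integral remainder centered at $0$ and evaluated at $T$. I would derive it by induction on $s$ via integration by parts: the base case $s=1$ is the fundamental theorem of calculus, and the inductive step integrates by parts against $\phi^{(s)}$, trading one derivative for the weight $(T-t)^{s-2}/(s-2)!$ and contributing the boundary term $\frac{T^{s-1}}{(s-1)!}\,\phi^{(s-1)}(0)$; iterating $s-1$ times recovers the full Taylor sum. Assembling the two computations proves the proposition.

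The main obstacle is more definitional than analytic: one must ensure the notation for $\Gamma^{(s)}$ used here agrees with the generalization of the $s=1$ kernel suggested by the statement, and that the row/column conventions in the inner product line up so that the left-hand scalar really equals the $j$-th component of the right-hand vector. Once the definition is fixed, no further subtleties arise, because continuity of $\gamma_u$, $u$, $f$, and $g$ together with compactness of $[0,T]$ makes every integral absolutely convergent and makes the exchange of the inner product with the integral automatic via the reproducing property.
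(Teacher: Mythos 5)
Your proposal is correct and matches the route the paper itself indicates: the paper defers the proof to \cite{SCC.Abudia.Channagiri.eainprep} but remarks that the result ``follows from a simple application of the fundamental theorem of calculus,'' and your argument---identify $\Gamma^{(s)}_{\theta_u,u}$ as the representer of the weighted integral functional (consistent with \eqref{eq:evaluate-occupation}), apply the reproducing property to $\begin{pmatrix}(f)_j & (g)_j\end{pmatrix}$, substitute the $s$-th order dynamics, and finish with Taylor's theorem with integral remainder (iterated FTC)---is exactly that. No gaps.
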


\begin{proof}
See the proof of 
 Proposition 1 in \cite{SCC.Abudia.Channagiri.eainprep}.
\end{proof}

While Proposition \ref{prop:ftc} follows from a simple application of the fundamental theorem of calculus, it sets the stage for a powerful approximation routine leveraging higher order control occupation kernels. These kernels can be implemented directly using a direct interpolation approach, or they can arise naturally in a regularized regression problem.

\begin{proposition}
\cite{SCC.Abudia.Channagiri.eainprep} Fix $s \in \mathbb{N}$, let $H$ be a vvRKHS with the kernel operator $K_x$. If $\theta$ and $u$ are continuous signals from $[0,T]$ to $\mathbb{R}^n$ and $\mathbb{R}^m$ respectively. Then the control occupation kernel $\Gamma^{(s)}_{\theta,u}(x)$ can be obtain by \begin{equation}\label{eq:evaluate-occupation}
\Gamma^{(s)}_{\theta,u}(x) = \frac{1}{(s-1)!} \int_0^T (T-t)^{s-1} \begin{pmatrix} 1 & u^{\top}(t) \end{pmatrix} K_{\theta(t)}(x)dt.
\end{equation}
\end{proposition}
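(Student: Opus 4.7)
The plan is to verify \eqref{eq:evaluate-occupation} by testing both sides against an arbitrary $v\in\mathbb{R}^{m+1}$; since the two sides are elements of $\mathbb{R}^{m+1}$, this pointwise characterization is enough.

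First I would record the defining property of the $s$-th order control occupation kernel that underlies Proposition~\ref{prop:ftc}: $\Gamma^{(s)}_{\theta,u}$ is the Riesz representer in $H$ of the bounded linear functional
\[
h\mapsto\frac{1}{(s-1)!}\int_0^T(T-t)^{s-1}h(\theta(t))\begin{pmatrix}1\\u(t)\end{pmatrix}dt,
\]
so $\langle h,\Gamma^{(s)}_{\theta,u}\rangle_H$ equals the above integral for every $h\in H$. (The boundedness of this functional reduces to the kernel-operator norm bound $\|K_x\|$ being continuous on the compact range of $\theta$, together with the fact that $\|(1,u(t)^\top)\|$ is bounded on $[0,T]$.)

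Next I would combine the vvRKHS reproducing identity $\langle\Gamma^{(s)}_{\theta,u}(x),v\rangle_{\mathcal{Y}}=\langle\Gamma^{(s)}_{\theta,u},K_xv\rangle_H$ with the defining property applied to the test element $h=K_xv\in H$ to obtain
\[
\langle\Gamma^{(s)}_{\theta,u}(x),v\rangle_{\mathcal{Y}}=\frac{1}{(s-1)!}\int_0^T(T-t)^{s-1}(K_xv)(\theta(t))\begin{pmatrix}1\\u(t)\end{pmatrix}dt.
\]
A second use of the reproducing property gives $(K_xv)(\theta(t))=vK_{\theta(t)}(x)$ (pair $K_xv$ against $K_{\theta(t)}w$ for arbitrary $w\in\mathbb{R}^{m+1}$ and use symmetry of the real inner product). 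Pulling the fixed row vector $v$ outside the scalar-valued Riemann integral on the right---an interchange justified by continuity of the integrand on the compact interval $[0,T]$---produces an equality of the form $\langle\Gamma^{(s)}_{\theta,u}(x),v\rangle_{\mathcal{Y}}=\langle\text{RHS of \eqref{eq:evaluate-occupation}},v\rangle_{\mathcal{Y}}$. Since $v$ is arbitrary, the claim follows.

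The only real obstacles are notational: keeping the row-vector convention on $\mathcal{Y}$ consistent and tracking how the operator-valued symbol $K_{\theta(t)}(x)$ acts on the vector $(1,u^\top(t))$. There is no deeper analytical difficulty, since everything reduces to the Riesz representation and two applications of the reproducing property, combined with a routine exchange of a finite-dimensional pairing with a Riemann integral over a compact interval.
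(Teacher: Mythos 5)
The paper does not actually prove this proposition; it only cites the proof of Proposition~2 in the external reference, so there is no in-paper argument to compare against. Your proof is the standard one for such evaluation formulas (Riesz representation of the integral functional, test against $K_xv$, a second application of the reproducing property, then pull $v$ out of the Riemann integral), and it is correct; it supplies exactly what the citation omits. One small caution: the intermediate identity you display, $(K_xv)(\theta(t))=vK_{\theta(t)}(x)$, is not literally an equality of row vectors for a general operator-valued kernel (it would force $K(x,y)$ to be a symmetric matrix, not merely to satisfy $K(x,y)^{*}=K(y,x)$). What symmetry of the real inner product on $H$ actually gives is the bilinear identity $\left\langle (K_xv)(\theta(t)),w\right\rangle_{\mathcal Y}=\left\langle (wK_{\theta(t)})(x),v\right\rangle_{\mathcal Y}$ for all $w$, which, specialized to $w=\begin{pmatrix}1&u^{\top}(t)\end{pmatrix}$, is precisely the step your argument needs; your parenthetical derivation already produces this, so the conclusion stands, but the displayed shorthand should be replaced by the weak form to keep the row-vector bookkeeping honest.
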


\begin{proof}
See the proof of 
 Proposition 2 in \cite{SCC.Abudia.Channagiri.eainprep}.
\end{proof}


If variable length trajectories are admitted, then
it can be shown that the span of the occupation kernels corresponding to trajectories $\theta_u$ that result from the application of the control $u$ to the system in \eqref{eq:control-affine-dynamics} is dense in $H$.

\begin{proposition}\label{prop:SystemTrajectoryDensity}
For any order $s$, the span of the set
\[
    A_s\coloneqq\left\{\Gamma_{\gamma_{u,\gamma_0},u}^{(s)}\mid \begin{gathered}
        \gamma_{u,\gamma_0}\in C([0,T_u];\mathbb R^n),\gamma_0\in\mathbb{R}^n,\\
        u\in C([0,T_u];\mathbb R^m),T_u\in[0,T]
    \end{gathered} \right\}
\]
is dense in $H$, where $\gamma_{u,\gamma_0}$ is a solution of \eqref{eq:control-affine-dynamics}, under the control $u$ and starting from $\gamma_0$.
\end{proposition}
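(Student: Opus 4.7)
My plan is to establish the density claim by verifying that the orthogonal complement $\vspan(A_s)^{\perp}$ in $H$ is trivial. Fix $h \in H$ with $\langle h, \Gamma^{(s)}_{\gamma_{u,\gamma_0}, u}\rangle_H = 0$ for every admissible triple $(\gamma_0, u, T_u)$. By the reproducing property that defines the control occupation kernel (the $s$-th order analogue of the identity used in Proposition~\ref{prop:ftc}), this inner product equals
\[
\frac{1}{(s-1)!}\int_0^{T_u} (T_u - t)^{s-1} h(\gamma_{u,\gamma_0}(t)) \begin{pmatrix} 1 \\ u(t) \end{pmatrix} dt.
\]

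Next, I would fix $\gamma_0$ and $u$ and view the above as a function $F$ of the upper limit $T_u \in [0, T]$. Setting $\phi(t) \coloneqq h(\gamma_{u,\gamma_0}(t))\begin{pmatrix} 1 \\ u(t) \end{pmatrix}$, which is continuous in $t$, $F$ is (up to the factor $1/(s-1)!$) the $s$-fold iterated integral of $\phi$, so $F \in C^s$ with $F^{(s)}(T_u) = \phi(T_u)$. Because $F$ vanishes identically by assumption, so does $\phi$; evaluating at $T_u = 0$ yields $h(\gamma_0)\begin{pmatrix} 1 \\ u(0) \end{pmatrix} = 0$.

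Finally, since $\gamma_0 \in \mathbb{R}^n$ and $u(0) \in \mathbb{R}^m$ can be prescribed arbitrarily and independently (take any continuous $u$, e.g.\ a constant control, for which a local solution of the system exists on some $[0, T_u]\subseteq[0,T]$), the row vectors $(1,\, u(0)^\top)$ sweep out a set whose span is all of $\mathbb{R}^{m+1}$. Writing $h = (h_0, \ldots, h_m)$ componentwise, first setting $u(0) = 0$ gives $h_0(\gamma_0) = 0$, and then varying $u(0)$ in $\mathbb{R}^m$ forces $h_i(\gamma_0) = 0$ for each $i = 1,\ldots,m$. Since $\gamma_0$ is arbitrary, $h$ vanishes pointwise on $\mathbb{R}^n$, and because $H$ consists of continuous functions this means $h = 0$ as an element of $H$.

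The principal obstacle I foresee is ensuring that the family $A_s$ is genuinely rich enough to drive the orthogonality conclusion: one needs local existence of $\gamma_{u,\gamma_0}$ on a nontrivial interval $[0, T_u]$ for every $\gamma_0 \in \mathbb{R}^n$ and every admissible continuous $u$, together with the freedom to vary $T_u$ over a set with a limit point so that the iterated-integral argument yields $F \equiv 0$. Both follow from the standing regularity of $f$ and $g$ implicit in the paper. The differentiation-under-the-integral step is routine because the weight $(T_u - t)^{s-1}$ is polynomial in $T_u$ and $\phi$ is continuous.
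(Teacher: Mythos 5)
Your proof is correct, and it shares the paper's overall skeleton: both arguments show that $A_s^{\perp}=\{0\}$ and then invoke $(A_s^{\perp})^{\perp}=\overline{\vspan A_s}$. The difference lies in the key step that extracts pointwise information about $h$ from the vanishing of the pairings. The paper argues in the forward direction: given $h\neq 0$, pick $\gamma_0$ with $h(\gamma_0)\neq 0$ and a constant control $b$ with $h(\gamma_0)\bigl(\begin{smallmatrix}1\\b\end{smallmatrix}\bigr)\neq 0$, then use continuity and a sufficiently small $T_u$ so that the integral cannot average to zero; the extension to order $s$ is asserted rather than carried out. You instead fix $h\in A_s^{\perp}$, recognize the pairing as the $s$-fold iterated integral (Cauchy's repeated-integration formula) of $\phi(t)=h(\gamma_{u,\gamma_0}(t))\bigl(\begin{smallmatrix}1\\u(t)\end{smallmatrix}\bigr)$ as a function of the upper limit $T_u$, and differentiate $s$ times to conclude $\phi\equiv 0$, then evaluate at $t=0$ and sweep $u(0)$ over $\mathbb{R}^m$ to kill each component of $h$ at $\gamma_0$. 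Your route handles arbitrary order $s$ uniformly and explicitly, which is the part the paper glosses over, at the modest cost of needing $F(T_u)=0$ on a nondegenerate interval (guaranteed by local existence of solutions, which the paper's argument also tacitly uses). Both proofs rely on the same unstated regularity hypotheses on $f$ and $g$ for local existence of trajectories from every initial condition under every continuous control.
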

\begin{proof}
    Select $\gamma_0$ such that $h(\gamma_0) \neq 0$. Continuity of $h$ and $\gamma_{u,\gamma_0}$ can then be invoked to conclude that for any constant control signal $u(t) = b$ such that $h(\gamma_0) \begin{pmatrix} 1 \\ b \end{pmatrix} \neq 0$, there exists a $T_u > 0$ for which $ \int_{0}^{T_u} h(\gamma_{u,\gamma_0}(t)) \begin{pmatrix} 1 \\ b \end{pmatrix} dt \neq 0$. For example, one can select $T_u = \min\left\{T,\inf_t\left\{h(\gamma_{u,\gamma_0}(t))\begin{pmatrix} 1 \\ b \end{pmatrix}=0\right\}\right\}$. A straightforward extension of the above argument to the higher order case allows us to conclude that for any order $s$ and any nonzero $h$, there exist $\gamma_0$, $u$, and $T_u$ such that $\left\langle h , \Gamma_{\gamma_{u,\gamma_0},u}^{(s)} \right\rangle_H \neq 0$. That is, $H\cap A_s^\perp = \{0\}$, and as a result, $(A_s^\perp)^\perp = H$. Since $(A_s^\perp)^\perp = \overline{\vspan A_s}$, we conclude that $H = \overline{\vspan A_s}$.
\end{proof}
Proposition \ref{prop:SystemTrajectoryDensity} motivates the use of control occupation kernels for system identification. If $H$ is universal, then any continuous function can be approximated, uniformly over any compact set, by a function in $H$, and any function in $H$ can be approximated using a linear combination of sufficiently many control occupation kernels.

\begin{definition}\label{def:totalderivative}
   \cite{rosenfeld2024dynamicmodedecompositioncontrol} The \emph{control Liouville operator with symbol $f,g$} denoted by $A_{f,g}: \mathcal{D}(A_{f,g}) \to H$ is defined $A_{f,g} h(x) :=   \nabla_x h(x) \begin{pmatrix}
        f(x) & g(x)
    \end{pmatrix}$ where the domain $\mathcal{D}(A_{f,g})$ is defined canonically as
    \begin{equation}\label{eq:totalderivative}
        \mathcal{D}(A_{f,g}) = \{ h \in H : A_{f,g} h \in  H\}.
    \end{equation}
\end{definition}

There is a relation between the adjoint of the control Liouville operator and the control occupation kernel, which is presented in the following proposition

\begin{proposition}
(\cite[Proposition 3]{rosenfeld2024dynamicmodedecompositioncontrol})
Suppose that $f$ and $g$ correspond to a control Liouville operator, $A_{f,g}:\mathcal{D}(A_{f,g}) \to H$, and let $u$ be an admissible control signal for the control-affine dynamical system in (\ref{eq:control-affine-dynamics}) with a corresponding controlled trajectory, $\gamma_u$. Then, $\Gamma_{\gamma_u ,u} \in {D}(A^*_{f,g})$ and $A^*_{f,g}\Gamma_{\gamma_u ,u}= K(\cdot,\gamma_u(T))-K(\cdot,\gamma_u(0))$. 
\end{proposition}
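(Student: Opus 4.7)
The plan is to verify directly, for every $h \in \mathcal{D}(A_{f,g})$, that $\langle A_{f,g} h, \Gamma_{\gamma_u, u}\rangle_H = \langle h, K(\cdot, \gamma_u(T)) - K(\cdot, \gamma_u(0))\rangle_H$. By the definition of the Hilbert-space adjoint, a single identity of this form simultaneously establishes $\Gamma_{\gamma_u, u} \in \mathcal{D}(A^*_{f,g})$ and identifies the value of $A^*_{f,g}\Gamma_{\gamma_u, u}$, so the proof reduces to this one computation.

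First I would invoke the defining property of the first-order control occupation kernel (Proposition \ref{prop:ftc} with $s=1$, equivalently the construction of $\Gamma_{\gamma_u, u}$ as the Riesz representer of the integration functional $\phi \mapsto \int_0^T \phi(\gamma_u(t)) \begin{pmatrix} 1 \\ u(t) \end{pmatrix} dt$), applied to $\phi = A_{f,g} h$, to write the left-hand side as the scalar integral $\int_0^T (A_{f,g} h)(\gamma_u(t)) \begin{pmatrix} 1 \\ u(t) \end{pmatrix} dt$.

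Next I would substitute Definition \ref{def:totalderivative} to turn the integrand into $\nabla h(\gamma_u(t))\begin{pmatrix} f(\gamma_u(t)) & g(\gamma_u(t)) \end{pmatrix} \begin{pmatrix} 1 \\ u(t) \end{pmatrix} = \nabla h(\gamma_u(t))[f(\gamma_u(t)) + g(\gamma_u(t)) u(t)]$. The control-affine dynamics collapse the bracketed expression to $\dot\gamma_u(t)$, so the chain rule identifies the integrand with $\tfrac{d}{dt}[h(\gamma_u(t))]$ and the fundamental theorem of calculus reduces the integral to $h(\gamma_u(T)) - h(\gamma_u(0))$. A final application of the vvRKHS reproducing property in the reverse direction expresses these point evaluations as inner products against the kernel sections $K(\cdot, \gamma_u(T))$ and $K(\cdot, \gamma_u(0))$, yielding the desired identity; since the resulting representer lies in $H$ as a difference of kernel sections, $\Gamma_{\gamma_u,u}$ is genuinely in $\mathcal{D}(A^*_{f,g})$.

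The main obstacle I expect is purely notational: the paper's vvRKHS setup uses row-vector outputs, so one must keep careful track of the row/column conventions in order for the pairing with $\begin{pmatrix} 1 \\ u(t) \end{pmatrix}$, the matrix-valued Jacobian expression $\nabla h \begin{pmatrix} f & g \end{pmatrix}$, and the reproducing identity all to be dimensionally consistent and to produce a single scalar on each side. Once those conventions are pinned down, the remaining steps—chain rule, fundamental theorem of calculus, and the reproducing property—are routine, and no substantial analytic work beyond continuity/integrability (already guaranteed by $h \in \mathcal{D}(A_{f,g})$ and continuity of $\gamma_u$, $u$) is required.
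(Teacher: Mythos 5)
Your proposal is correct and follows essentially the same route as the proof the paper defers to by citation (Proposition 3 of the cited reference): apply the defining integral representation of $\Gamma_{\gamma_u,u}$ to $A_{f,g}h$, use the control-affine dynamics and the chain rule to recognize the integrand as $\tfrac{d}{dt}h(\gamma_u(t))$, invoke the fundamental theorem of calculus, and convert the point evaluations back to kernel sections, which simultaneously establishes membership in $\mathcal{D}(A^*_{f,g})$ and the formula for $A^*_{f,g}\Gamma_{\gamma_u,u}$. The only caution is one you already flag implicitly: since $A_{f,g}$ maps the scalar-valued space $\tilde{H}_d$ into the vector-valued space $H$, the final reproducing step takes place in $\tilde{H}_d$ with its scalar kernel $K_d$, so the kernel differences produced are elements of $\tilde{H}_d$ rather than sections of the operator-valued kernel of $H$.
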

\begin{proof}
    See the proof of Proposition 3 in \cite{rosenfeld2024dynamicmodedecompositioncontrol}
\end{proof}

Consider the exponential dot product kernel with parameter $\tilde{\rho}$, defined as $\tilde{K}_{\tilde{\rho}}(x,y) = \exp\left(\frac{x^{\top}y}{\tilde{\rho}}\right)$. In the single variable case, the RKHS of this kernel is the restriction of the Bargmann-Fock space to real numbers, denoted by $F^2_{\tilde{\rho}}\left(\mathbb{R}\right)$. This space consists of the set of functions of the form $h(x) = \sum_{k=0}^\infty a_k x^k$, where the coefficients satisfy $\sum_{k=0}^\infty \left\vert a_k\right\vert^2 \tilde{\rho}^k k! < \infty$, and the norm is given by $ \left\Vert h \right\Vert^{2}_{\tilde{\rho}} = \sum_{k=0}^\infty \left\vert a_k\right\vert^2 \tilde{\rho}^k k! $. Note that the set of polynomials in $x$ is a subset of $F^2_{\tilde{\rho}}\left(\mathbb{R}\right)$. Extension of this definition to the multivariable case yields the space $F^2_{\tilde{\rho}}\left(\mathbb{R}^n\right)$ where the collection of monomials, $x^{\alpha} \frac{\tilde{\rho}^{|\alpha|}}{\sqrt{\alpha!}}$, with multi-indices $\alpha \in \mathbb{N}^n$ forms an orthonormal basis.
In this setting, provided  $\tilde{\rho}_2 < \tilde{\rho}_1$, differential operators from $F^2_{\tilde{\rho}_1}(\mathbb{R}^n)$ to $F^2_{\tilde{\rho}_2}(\mathbb{R}^n)$ can be shown to be compact \cite[see Proposition 8]{SCC.Rosenfeld.Kamalapurkar2021}.

\section{Problem Statement}\label{sec:ProbS}
The objective in this manuscript is to learn an unknown control affine system from observed control signals and controlled trajectories, $\{ u_j : [0,T_j] \to \mathbb{R}^m \}_{i=1}^M$ and $\{ \gamma_{u_j}:[0,T_j] \to \mathbb{R}^n\}_{j=1}^M$, respectively, satisfying
\begin{equation} \label{eq:control-affine-dynamics}
    \dot{x} = F(x,u) = f(x) + g(x) u,
\end{equation}
where $f : \mathbb{R}^n \to \mathbb{R}^n$ is the drift function and $g:\mathbb{R}^n \to \mathbb{R}^{n\times m}$ is the control effectiveness matrix.

Systems of the form \eqref{eq:control-affine-dynamics} encompass linear systems and Euler-Lagrange models with invertible inertia matrices, and hence, represent a wide class of physical plants, including but not limited to robotic manipulators and autonomous ground, aerial, and underwater vehicles.

In order to facilitate the description of the controlled dynamical system in terms of operators, the vvRKHS framework from section \ref{sec:Preliminaries} is utilized in this paper. 
\begin{definition}
    Given a compact control Liouville operator $A_{f,g}:\tilde{H}_{d}\to H$, where $\tilde{H}_{d}$ is a scaler valued RKHS. The tuples $\{(\sigma_i,\phi_i,\psi_i)\}_{i=1}^\infty$, with $\sigma_i\in \mathbb{R}^n$, $\phi_i\in \tilde{H}_{d}$, and $\psi_i \in H$, are singular values, left singular vectors, and right singular vectors of $A_{f,g}$, respectively, if $\forall h\in\vspan{d}$, $ A_{f,g} h = \sum_{i=1}^\infty \sigma_i \psi_i \left\langle h,\phi_i\right\rangle_{\tilde{H}_{d}}$. \label{eq:infinite_spectral_reconstruction_svd}
\end{definition}
Let $h_{\mathrm{id}}:\mathbb{R}^n \to \mathbb{R}^n$ be the identity function. Given singular values, left singular vectors, and right singular vectors of $A_{f,g}$ and a control signal $u$, the dynamics of the system can be expressed as
\begin{multline}
    \dot{x} = \begin{pmatrix}A_{f,g} (h_{\mathrm{id}})_1\\\vdots\\A_{f,g} (h_{\mathrm{id}})_n\end{pmatrix} \begin{pmatrix}
        1\\u
    \end{pmatrix} \\
    = \begin{pmatrix}
    \sum_{i=1}^\infty \sigma_i \psi_i \left\langle (h_{\mathrm{id}})_1,\phi_i\right\rangle_{\tilde{H}_{d}}\\\vdots\\ \sum_{i=1}^\infty \sigma_i \psi_i \left\langle (h_{\mathrm{id}})_M,\phi_i\right\rangle_{\tilde{H}_{d}}
    \end{pmatrix}\begin{pmatrix}
        1\\u
    \end{pmatrix}
\end{multline}
The objective of this work is to generate a provably convergent finite truncation of the above model. To that end, we derive a finite rank representation of the control Liouville operator.

\section{Finite-rank Representation of the Control Liouville Operator} \label{sec:Finite-rank Representation}
To facilitate computation, an explicit finite-rank representation of $A_{f,g}$ is needed to determine the dynamic modes of the resultant system. In the following, finite collections of linearly independent vectors, $d^M$ and $\beta^M$ are selected to establish the needed finite-rank representation. Since the adjoint of $ A_{f,g} $ maps control occupation kernels to kernel differences (\cite[Proposition 3]{rosenfeld2024dynamicmodedecompositioncontrol}), the span of the collection of kernel differences
\begin{equation}
    d^M = \left\{K_d(\cdot,\gamma_{u_i}(T_i)) - K_d(\cdot,\gamma_{u_i}(0))\right\}_{i=1}^M\subset \tilde{H}_{d}
\end{equation}
is selected to be the domain of  $ A_{f,g} $. The corresponding Gram matrix is denoted by $G_{d^M} = \left(\left\langle d_i,d_j\right\rangle_{\tilde{H}_d}\right)_{i,j=1}^M$. The output of $A_{f,g}$ is projected onto the span of the control occupation kernels
\begin{equation}
    \beta^M = \left\{\Gamma_{\gamma_{u_i},u_i}\right\}_{i=1}^M\subset H.
\end{equation}
The corresponding Gram matrix is denoted by  $G_{\beta^M} = \left(\left\langle\beta_i,\beta_j\right\rangle_{H}\right)_{i,j=1}^M$.

A rank-$M$ (or less) representation of the operator $A_{f,g}$ is then given by  $P_{\beta^M} A_{f,g} P_{d^M}:\tilde{H}_d\to\vspan \beta^M$, where $P_{d^M}$ and $P_{\beta^M}$ denote projection operators onto $\vspan{d^M}$ and $\vspan{\beta^M}$, respectively.

Under the compactness assumptions and given rich enough data so that the spans of $ \{d_i\}_{i=1}^\infty $ and $\{\beta_i\}_{i=1}^\infty $ are dense in $\tilde{H}_{d}$ and $H$, respectively, the sequence of finite-rank operators $\{P_{\beta^M} A_{f,g} P_{d^M}\}_{M=1}^{\infty}$ can be shown to converge, in norm, to $A_{f,g}$. To facilitate the proof of convergence, we recall the following result from \cite{SCC.Rosenfeld.Kamalapurkar.ea2022}.
\begin{lemma}\label{lem:technical-lemma}
    Let $H$ and $G$ be RKHSs defined on $X\subset \mathbb{R}^n$ and let $A_N:H\to G$ be a finite-rank operator with rank $N$. If the spans of $\{d_i\}_{i=1}^{\infty}$ and $\{\beta_i\}_{i=1}^{\infty}$ are dense in $H$  and $G$, respectively, then for all $\epsilon > 0 $, there exists $M(N)\in\mathbb{N}$ such that for all $i\geq M(N)$ and $h\in H$, $\left\Vert A_N h - A_N P_{d^i} h\right\Vert_{G} \leq \epsilon \left\Vert h \right\Vert_H$ and $\left\Vert A_N h - P_{\beta^i} A_N h\right\Vert_{G} \leq \epsilon \left\Vert h \right\Vert_H$.
\end{lemma}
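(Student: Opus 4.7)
The plan is to exploit the finite-rank structure of $A_N$ by writing it as a sum of at most $N$ rank-one operators, and then reduce both inequalities to the problem of showing that the orthogonal projections onto $\vspan\{d_i\}_{i=1}^{\infty}$ (resp.\ $\vspan\{\beta_i\}_{i=1}^{\infty}$) converge strongly to the identity on the resulting finite collection of vectors. Using a singular value decomposition (or any convenient finite-rank factorization), I would write
\[
A_N h = \sum_{k=1}^{N}\langle h,e_k\rangle_H\, w_k,
\]
with $\{e_k\}_{k=1}^N\subset H$ and $\{w_k\}_{k=1}^N\subset G$ fixed finite collections. This immediately makes the two claims symmetric: one puts the perturbation on the input side and one on the output side.

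For the first inequality, I would use that $P_{d^i}$ is an orthogonal projection, hence self-adjoint, to rewrite
\[
A_N h - A_N P_{d^i}h = \sum_{k=1}^{N}\langle h,(I-P_{d^i})e_k\rangle_H\, w_k.
\]
Applying Cauchy--Schwarz term-by-term and the triangle inequality then yields
\[
\|A_N h - A_N P_{d^i}h\|_G \le \|h\|_H\sum_{k=1}^{N}\|(I-P_{d^i})e_k\|_H\,\|w_k\|_G.
\]
Density of $\vspan\{d_i\}_{i=1}^{\infty}$ in $H$ gives, for every fixed $k$, that $\|(I-P_{d^i})e_k\|_H \to 0$ as $i\to\infty$. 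Because there are only finitely many $k\in\{1,\dots,N\}$, I can choose $M_1(N)$ large enough that each summand is bounded by $\epsilon/(N\max_k\|w_k\|_G)$ for all $i\ge M_1(N)$, giving the desired bound $\epsilon\|h\|_H$.

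The second inequality is handled by the mirror-image argument: since $P_{\beta^i}$ is self-adjoint and acts on the output side,
\[
A_N h - P_{\beta^i}A_N h = \sum_{k=1}^{N}\langle h,e_k\rangle_H\,(I-P_{\beta^i})w_k,
\]
and the same Cauchy--Schwarz/triangle-inequality estimate yields
\[
\|A_N h - P_{\beta^i}A_N h\|_G \le \|h\|_H\sum_{k=1}^{N}\|e_k\|_H\,\|(I-P_{\beta^i})w_k\|_G.
\]
Density of $\vspan\{\beta_i\}_{i=1}^{\infty}$ in $G$ makes each factor $\|(I-P_{\beta^i})w_k\|_G$ vanish as $i\to\infty$, so by the same finite-collection argument there exists $M_2(N)$ giving the bound $\epsilon\|h\|_H$ for all $i\ge M_2(N)$. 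Setting $M(N)\coloneqq\max\{M_1(N),M_2(N)\}$ completes the proof.

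There is no genuine obstacle here; the only care needed is in noting that the finiteness of the sum (which is precisely what the rank-$N$ hypothesis buys) upgrades pointwise convergence of the projections on each $e_k$ and $w_k$ into the uniform-in-$h$ operator bound. Nothing more than strong convergence of the projections on a finite collection of vectors is used, which is why the result holds under the stated density hypothesis rather than requiring, say, compactness of $A_N$.
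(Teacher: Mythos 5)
Your proof is correct: decomposing the rank-$N$ operator as $A_N h = \sum_{k=1}^{N}\langle h,e_k\rangle_H w_k$, moving the projection onto the fixed vectors $e_k$ (via self-adjointness) or $w_k$, and using density plus finiteness of the collection to get a uniform-in-$h$ bound is exactly the standard argument. The paper itself gives no proof of this lemma — it only cites Theorem 2 of an external reference — so there is nothing in-paper to contrast with, but your argument is the expected one and is complete (modulo the trivial observation that the constant $\epsilon/(N\max_k\Vert w_k\Vert_G)$ should be handled separately in the degenerate case $A_N=0$).
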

\begin{proof}
    See the proof of \cite[Theorem 2]{SCC.Rosenfeld.Kamalapurkar.ea2022}.
\end{proof}
The convergence result for Liouville operators on Bargmann-Fock spaces restricted to the set of real numbers, which will be used for $H$ and $\tilde{H}_d$, follows from the following more general result.
\begin{proposition}\label{prop:convergence}
    If $A:\tilde{H}_d\to H$ is a compact operator and the spans of $ \{d_i\}_{i=1}^\infty $ and $\{\beta_i\}_{i=1}^\infty $ are dense in $\tilde{H}_{d}$ and $H$, respectively, then $\lim_{M\to\infty}\left\Vert A - P_{\beta^M} A P_{d^M} \right\Vert_{\tilde{H}_d}^{H} = 0$, where $\left\Vert\cdot\right\Vert_{\tilde{H}_d}^{H}$ denotes the operator norm of operators from $\tilde{H}_d$ to $H$.
\end{proposition}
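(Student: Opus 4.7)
The plan is a standard three-epsilon argument that exploits compactness of $A$ to reduce the problem to the finite-rank case handled by Lemma \ref{lem:technical-lemma}.

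First I would fix $\epsilon>0$ and use compactness of $A$ to produce a finite-rank operator $A_N:\tilde{H}_d\to H$ with $\|A-A_N\|_{\tilde{H}_d}^H<\epsilon/3$. Such an $A_N$ exists because every compact operator between Hilbert spaces is the norm-limit of its finite-rank truncations (for instance, via the singular value decomposition, which is available since $A$ is compact and $\tilde{H}_d,H$ are separable). This step is the conceptual crux: once $A$ is replaced by a finite-rank operator, Lemma \ref{lem:technical-lemma} is available.

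Next, I would apply Lemma \ref{lem:technical-lemma} to $A_N$ with the density hypotheses on $\{d_i\}$ and $\{\beta_i\}$ to obtain $M_0(N)\in\mathbb{N}$ such that for all $M\ge M_0(N)$, both $\|A_N - A_N P_{d^M}\|_{\tilde{H}_d}^H<\epsilon/3$ and $\|A_N - P_{\beta^M}A_N\|_{\tilde{H}_d}^H<\epsilon/3$ hold. Combining these two inequalities with the fact that $\|P_{\beta^M}\|\le 1$ gives
\[
\|A_N - P_{\beta^M} A_N P_{d^M}\|_{\tilde{H}_d}^H \le \|A_N - P_{\beta^M}A_N\|_{\tilde{H}_d}^H + \|P_{\beta^M}(A_N - A_N P_{d^M})\|_{\tilde{H}_d}^H < 2\epsilon/3.
\]

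Finally, I would assemble these via the triangle inequality, using $\|P_{\beta^M}\|\le 1$ and $\|P_{d^M}\|\le 1$ to absorb the projections in the term involving $A-A_N$:
\[
\|A - P_{\beta^M} A P_{d^M}\|_{\tilde{H}_d}^H \le \|A-A_N\|_{\tilde{H}_d}^H + \|A_N - P_{\beta^M} A_N P_{d^M}\|_{\tilde{H}_d}^H + \|P_{\beta^M}(A-A_N)P_{d^M}\|_{\tilde{H}_d}^H,
\]
each of which is bounded by $\epsilon/3$ (or less) for $M$ large, yielding the claim.

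The only genuine obstacle is the approximation step: Lemma \ref{lem:technical-lemma} delivers a bound of the form $\epsilon\|h\|$ valid for all $h$, so it already produces operator-norm convergence for a \emph{fixed} finite-rank $A_N$, but this is precisely what makes compactness of $A$ indispensable — the density hypotheses alone only give strong operator topology convergence of $P_{\beta^M}AP_{d^M}$ to $A$ in general, and compactness is what upgrades this to norm convergence. I expect no other technical difficulty; the argument is the standard bootstrap from SOT to norm convergence for compact operators.
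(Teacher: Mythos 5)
Your proposal is correct and follows essentially the same argument as the paper: approximate the compact $A$ in norm by a finite-rank $A_N$, invoke Lemma \ref{lem:technical-lemma} to control the projection errors for $A_N$, and absorb the remaining terms using $\left\Vert P_{\beta^M}\right\Vert\leq 1$ and $\left\Vert P_{d^M}\right\Vert\leq 1$. The only nit is arithmetic: your middle term is bounded by $2\epsilon/3$ rather than $\epsilon/3$, so the final bound is $4\epsilon/3$ (the paper likewise ends at $4\epsilon$), which is immaterial since $\epsilon$ is arbitrary.
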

\begin{proof}
    Let $\{A_N\}_{N=1}^\infty$ be a sequence of rank-$N$ operators converging, in norm, to $A$. For an arbitrary $h\in \tilde{H}_d$, 
\begin{multline*}
    \left\Vert Ah - P_{\beta^M} AP_{d^M} h\right\Vert_{H} \\
    \leq \left\Vert Ah - A_{N}h  \right\Vert_{H} 
    + \left\Vert A_{N}h - A_{N}P_{d^M}h \right\Vert_{H} \\
    +\left\Vert A_{N}P_{d^M}h - P_{\beta^M}A_{N}P_{d^M}h \right\Vert_{H}  \\
    +\left\Vert P_{\beta^M}A_{N}P_{d^M}h - P_{\beta^M} A P_{d^M}h\right\Vert_{H}.
\end{multline*}
Using the fact that $A_N$ and $P_{\beta^M}A_{N}P_{d^M}$ are finite-rank operators, Lemma \ref{lem:technical-lemma}, can be used to conclude that for all $\epsilon > 0 $, there exists $M(N)\in\mathbb{N}$ such that for all $i\geq M(N)$
\begin{multline*}
    \left\Vert Ah - P_{\beta^i} AP_{d^i} h\right\Vert_{H} 
    \leq \left\Vert Ah - A_{N}h  \right\Vert_{H}\\
    + 2\epsilon\left\Vert h \right\Vert_{\tilde{H}_d}
    +\left\Vert A_{N}P_{d^i}h - A P_{d^i}h\right\Vert_{H}.
\end{multline*}
Since $A_{N}$ converges to $A$ in norm, given $\epsilon > 0$, there exists $N\in \mathbb{N}$ such that for all $j\geq N$, and  $g\in \tilde{H}_d$ $\left\Vert Ag - A_{j}g  \right\Vert_{H} \leq \epsilon \left\Vert g  \right\Vert_{\tilde{H}_d} $. Thus, for all  $j\geq N$ and  $i\geq M(j)$, $ \left\Vert Ah - P_{\beta^i} AP_{d^i} h\right\Vert_{H}
    \leq 4\epsilon\left\Vert h \right\Vert_{\tilde{H}_d}$.
\end{proof}
The convergence result can then be stated as follows.
\begin{theorem}\label{thm:norm-convergence}
    Let $\rho_d\in\mathbb{R}$, $\varrho_d\in\mathbb{R}$, and $\rho=\begin{bmatrix}
        \rho_1&\ldots&\rho_{m+1}
    \end{bmatrix}^{\top}\in\mathbb{R}^{m+1}$ be parameters such that $ \rho_i < \varrho_d $, and $\varrho_d < \rho_d $ for $i = 1,\ldots,m+1$. Let $\tilde{H}_d = F^2_{\tilde{\rho}_d}(\mathbb{R}^n)$, $\tilde{G}_d = F^2_{\tilde{\varrho}_d}(\mathbb{R}^n)$, and $H = F^2_{\rho}(\mathbb{R}^n)$. If $f$ and $g$ are component-wise polynomial, and if the spans of the collections $ \{d_i\}_{i=1}^\infty $ and $\{\beta_i\}_{i=1}^\infty $ are dense in $\tilde{H}_{d}$ and $H$, respectively, then $\lim_{M\to\infty}\left\Vert A_{f,g} - P_{\beta^M} A_{f,g} P_{d^M} \right\Vert_{\tilde{H}_d}^{H} = 0$.
\end{theorem}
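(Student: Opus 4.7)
The overall plan is to reduce the theorem to Proposition \ref{prop:convergence} by verifying that under the stated hypotheses, $A_{f,g}$ is actually compact as an operator from $\tilde{H}_d$ to $H$. Once compactness is in hand, density of the spans of $\{d_i\}$ and $\{\beta_i\}$ is exactly what Proposition \ref{prop:convergence} needs, and the norm convergence follows immediately.

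To establish compactness of $A_{f,g}$, I would factor it through the intermediate Bargmann-Fock space $\tilde{G}_d = F^2_{\tilde{\varrho}_d}(\mathbb{R}^n)$. Writing $A_{f,g} h = \nabla_x h \cdot \bigl(f \;\; g\bigr)$ suggests the factorization $A_{f,g} = M_{f,g} \circ \nabla$, where $\nabla : \tilde{H}_d \to \tilde{G}_d^n$ is the (vector-valued) gradient operator and $M_{f,g} : \tilde{G}_d^n \to H$ is the componentwise multiplication by the polynomial matrix $\bigl(f \;\; g\bigr)$ followed by the row-vector contraction. Because $\tilde{\varrho}_d < \tilde{\rho}_d$, the differential-operator compactness result cited in the preliminaries (Proposition 8 of \cite{SCC.Rosenfeld.Kamalapurkar2021}) guarantees that each component of $\nabla$ is compact from $\tilde{H}_d$ to $\tilde{G}_d$, and a direct sum of compact operators is compact, so $\nabla$ itself is compact.

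Next I would argue that $M_{f,g}$ is bounded. For each entry of $\bigl(f \;\; g\bigr)$, which is a polynomial by hypothesis, multiplication by that polynomial sends $\tilde{G}_d = F^2_{\tilde{\varrho}_d}(\mathbb{R}^n)$ boundedly into each $F^2_{\rho_i}(\mathbb{R}^n)$, because the parameter strict inequality $\rho_i < \tilde{\varrho}_d$ ensures that the monomial-basis weights $\rho_i^{|\alpha|}\alpha!$ decay enough relative to $\tilde{\varrho}_d^{|\alpha|}\alpha!$ to absorb the finite-degree shift introduced by the polynomial. Assembling the $m+1$ row-components, $M_{f,g}$ is bounded into $H = F^2_{\rho}(\mathbb{R}^n)$. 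Since the composition of a bounded operator with a compact operator is compact, $A_{f,g} = M_{f,g}\circ\nabla$ is compact.

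With compactness secured, invoking Proposition \ref{prop:convergence} with the stated density assumptions on $\{d_i\}$ in $\tilde{H}_d$ and $\{\beta_i\}$ in $H$ yields $\left\Vert A_{f,g} - P_{\beta^M}A_{f,g}P_{d^M}\right\Vert_{\tilde{H}_d}^H \to 0$, as required. The main obstacle I expect is the careful bookkeeping of the polynomial multiplication bound between Bargmann-Fock spaces with different parameters: one has to quantify, in terms of the orthonormal monomial basis $x^\alpha \tilde{\rho}^{|\alpha|}/\sqrt{\alpha!}$, the cost of multiplying by each monomial $x^\beta$ and check that the strict parameter gap $\rho_i < \tilde{\varrho}_d$ makes the resulting multiplier norm finite. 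Everything else is a clean composition argument plus a citation.
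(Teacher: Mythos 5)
Your proposal is correct and follows essentially the same route as the paper: the paper's proof consists of citing \cite[Propositions 7, 8, and 9]{rosenfeld2024dynamicmodedecompositioncontrol} to obtain compactness of $A_{f,g}$ and then invoking Proposition \ref{prop:convergence} with the density hypotheses. Your factorization of $A_{f,g}$ through $\tilde{G}_d$ as a bounded polynomial-multiplication operator composed with a compact gradient operator (exploiting the strict parameter gaps $\rho_i < \varrho_d < \rho_d$) is precisely the content of the results the paper cites, so you have simply unpacked the citation.
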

\begin{proof}
    \cite[Propositions 7, 8, and 9]{rosenfeld2024dynamicmodedecompositioncontrol} imply that $A_{f,g}$ is compact and hence, the theorem follows from Proposition \ref{prop:convergence}. 
\end{proof}
\section{Matrix Representation of the Finite-rank Operator} \label{sec:Matrix Representation}
 To formulate a matrix representation of the finite-rank operator $ P_{\beta^M} A_{f,g}P_{d^M} $, the operator is restricted to $\vspan d^M$ to yield the operator $ P_{\beta^M} A_{f,g}|_{d^M}:\vspan{d^M}\to\vspan{\beta^M}$. For brevity of exposition, the superscript $M$ is suppressed hereafter and $d$ and $\beta$ are interpreted as $M-$dimensional vectors. 
\begin{proposition}
If $ h = \delta^\top d\in \vspan{d}$ is a function with coefficients $\delta\in\mathbb{R}^M$ and if $g = P_\beta A_{f,g} h$, then $g = a^{\top}\beta$, where $a = G_\beta^{+} G_d \delta$ and $(\cdot)^{+}$ denotes the Moore-Penrose pseudoinverse.
\end{proposition}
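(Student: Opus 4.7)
The plan is to exploit the adjoint identity from \cite[Proposition 3]{rosenfeld2024dynamicmodedecompositioncontrol} recalled earlier, namely $A_{f,g}^\ast \Gamma_{\gamma_{u_i},u_i} = K_d(\cdot,\gamma_{u_i}(T_i)) - K_d(\cdot,\gamma_{u_i}(0))$, which in the notation of this section reads $A_{f,g}^\ast \beta_i = d_i$ for every $i=1,\ldots,M$. This is the single structural fact that ties together the two Gram matrices and drives the entire formula.

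First I will use the fact that $g := P_\beta A_{f,g} h$ lies, by definition, in $\vspan \beta$, so it can be written as $g = a^\top \beta$ for some $a\in\mathbb{R}^M$, and the coefficient vector $a$ is uniquely determined modulo the null space of $G_\beta$ (which is trivial under the standing linear-independence hypothesis on $\beta^M$, making $G_\beta^+=G_\beta^{-1}$; the proposition is stated with $G_\beta^+$ for generality). The projection characterization gives the $M$ equations $\langle g,\beta_j\rangle_H = \langle A_{f,g} h,\beta_j\rangle_H$ for $j=1,\ldots,M$.

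Next I will evaluate both sides. The left side expands as $\langle a^\top\beta,\beta_j\rangle_H = (G_\beta a)_j$. For the right side, I move $A_{f,g}$ to its adjoint, then invoke the identity $A_{f,g}^\ast \beta_j = d_j$ to get
\begin{equation*}
\langle A_{f,g} h,\beta_j\rangle_H = \langle h,A_{f,g}^\ast\beta_j\rangle_{\tilde H_d} = \langle h,d_j\rangle_{\tilde H_d}.
\end{equation*}
Expanding $h = \delta^\top d$ yields $\langle h,d_j\rangle_{\tilde H_d} = (G_d\delta)_j$. Assembling the $M$ equations in matrix form produces $G_\beta a = G_d\delta$.

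Finally, I will solve this linear system. Because $a$ represents the coefficient expansion of an element that genuinely lies in $\vspan\beta$, the right-hand side $G_d\delta$ necessarily lies in the range of $G_\beta$, so the system is consistent and $a = G_\beta^+ G_d \delta$ is a valid solution; any other solution differs from it by an element of $\ker G_\beta$, which gives rise to the same function in $\vspan\beta$. Hence $g = a^\top\beta$ with $a = G_\beta^+ G_d\delta$, as claimed. The only mildly delicate step is the consistency/uniqueness discussion for the pseudoinverse; everything else is a direct computation once the adjoint identity $A_{f,g}^\ast\beta_i = d_i$ is invoked.
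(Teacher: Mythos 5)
Your proof is correct and follows essentially the same route as the paper's: both invoke the adjoint identity $A_{f,g}^\ast\beta_j = d_j$, characterize the projection by the normal equations $G_\beta a = G_d\delta$, and select the pseudoinverse solution. Your added remarks on consistency of the linear system and on the kernel of $G_\beta$ are a welcome (if minor) elaboration of a point the paper passes over silently.
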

\begin{proof}
    \cite[Propositions 3 and 6]{rosenfeld2024dynamicmodedecompositioncontrol} imply that that for all $j=1,\cdots,M$, $ A_{f,g}^* \beta_j = d_j$. Note that since $g$  is a projection of $ A_{f,g} h$ onto $\vspan \beta$, $g = a^{\top}\beta$ for any $a$ that solves
    \begin{equation} \label{eq:alpha_projection}
        G_\beta a = \begin{bmatrix}
        \left\langle A_{f,g}h,\beta_1\right\rangle_{H}\\\vdots\\\left\langle A_{f,g}h,\beta_M\right\rangle_{H}
        \end{bmatrix} = \begin{bmatrix}
            \left\langle h,A_{f,g}^*\beta_1\right\rangle_{H}\\\vdots\\\left\langle  h,A_{f,g}^*\beta_M\right\rangle_{H}
        \end{bmatrix} 
    \end{equation}
    Using the adjoint relationship,
    \begin{equation}\label{eq:beta_projection}
        G_\beta a = \begin{bmatrix}
            \left\langle h,d_1\right\rangle_{H}\\\vdots\\\left\langle  h,d_M\right\rangle_{H}
        \end{bmatrix} = \begin{bmatrix}
            \left\langle \delta^\top d,d_1\right\rangle_{H}\\\vdots\\\left\langle  \delta^\top d,d_M\right\rangle_{H}
        \end{bmatrix} = G_d\delta
    \end{equation}
    Selecting the solution 
    \begin{equation}
        a = G_\beta^+G_d\delta\label{eq:b_matrix}
    \end{equation}
    of \eqref{eq:beta_projection}, a matrix representation $[P_\beta A_{f,g}]_d^\beta$ of the operator $P_\beta A_{f,g}|_d$ is obtained as $G_\beta^{+} G_d$.
\end{proof}
Note that matrix representations are generally not unique. Different representations may be obtained by selecting different solutions of \eqref{eq:alpha_projection} and \eqref{eq:b_matrix}. In the case where the Gram matrix $G_{\beta}$ is nonsingular, equation \eqref{eq:alpha_projection} has a unique solutions, resulting in the unique matrix representation $G_\beta^{-1} G_d$.

In the following section, the matrix representation $[P_\beta A_{f,g}]_d^\beta$ is used to construct a data-driven representation of the singular values and the left and right singular functions of $P_\beta A_{f,g}\vert_d$. 

\section{Singular Functions of the Finite-rank Operator} \label{sec:SVD}
The following proposition states that the SVD of $P_\beta A_{f,g}|_d$ can be computed using matrices in the matrix representation $[P_\beta A_{f,g}]_d^\beta$ derived in the previous section.
\begin{proposition}
    If $ (W,\Sigma,V) $ is the SVD of $G_\beta^{+}$ with $W = \begin{bmatrix} w_1,&\ldots,&w_M \end{bmatrix}$, $V = \begin{bmatrix} v_1,&\ldots,&v_M \end{bmatrix}$, and $\Sigma = \diag\left(\begin{bmatrix} \sigma_1,&\ldots,&\sigma_M \end{bmatrix}\right)$, then for all $i=1,\ldots,M$, $\sigma_i$ are singular values of $P_\beta A_{f,g}|_d$ with left singular functions $\phi_i := v_i^\top d$ and right singular functions $\psi_i := w_i^\top \beta$.
\end{proposition}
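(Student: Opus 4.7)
The plan is to verify the defining property of the SVD from Definition 1 directly, using the matrix representation $[P_\beta A_{f,g}]_d^\beta = G_\beta^{+} G_d$ derived in the previous section. For an arbitrary $h \in \vspan d$, I would write $h = \delta^\top d$ with $\delta \in \mathbb{R}^M$, invoke the previous proposition to obtain $P_\beta A_{f,g} h = a^\top \beta$ with $a = G_\beta^{+} G_d \delta$, and then substitute the matrix SVD $G_\beta^{+} = W \Sigma V^\top = \sum_{i=1}^M \sigma_i w_i v_i^\top$ into this coefficient vector.

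The key calculational step is to reinterpret the scalars $v_i^\top G_d \delta$ appearing in the resulting expansion as inner products in $\tilde{H}_d$. Using bilinearity of the inner product and the definition of the Gram matrix,
\[
    \langle h, \phi_i \rangle_{\tilde{H}_d} = \langle \delta^\top d, v_i^\top d \rangle_{\tilde{H}_d} = \delta^\top G_d v_i = v_i^\top G_d \delta.
\]
Combined with the identification $w_i^\top \beta = \psi_i$, this would yield
\[
    P_\beta A_{f,g} h = \sum_{i=1}^M \sigma_i \psi_i \langle h, \phi_i \rangle_{\tilde{H}_d},
\]
which is precisely the reconstruction identity demanded by Definition 1 of the triples $(\sigma_i,\phi_i,\psi_i)$. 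Since $h \in \vspan d$ was arbitrary, the triples would form an SVD of $P_\beta A_{f,g}|_d$ in the sense of Definition 1.

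There is no substantial obstacle in this argument; it reduces to carefully tracking the Gram matrix $G_d$ between the coefficient-vector representation and the RKHS inner product. The conceptual point worth emphasizing is that Definition 1 only requires the spectral reconstruction identity and does not separately impose orthonormality of $\{\phi_i\}$ or $\{\psi_i\}$ in the ambient RKHS norms of $\tilde{H}_d$ and $H$. This is exactly what permits the ordinary matrix SVD of $G_\beta^{+}$ — rather than a generalized SVD weighted by $G_d$ and $G_\beta$ — to produce valid singular tuples for the operator, and it is the only subtlety I would flag explicitly in the write-up.
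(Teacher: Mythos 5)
Your argument is correct and follows essentially the same route as the paper: both start from the matrix representation $[P_\beta A_{f,g}]_d^\beta = G_\beta^{+}G_d$, substitute the matrix SVD $G_\beta^{+}=W\Sigma V^\top$, and use the Gram-matrix identity $\langle \delta^\top d, v_i^\top d\rangle_{\tilde H_d}=v_i^\top G_d\delta$ to match the reconstruction identity of Definition~1 (the paper merely runs the computation as a backward chain of implications rather than a forward derivation). Your closing observation about non-orthonormality of $\{\phi_i\}$ and $\{\psi_i\}$ is exactly the caveat the paper records in its remark.
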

\begin{proof}
    Let $\phi_i = v_i^\top d$ and $\psi_i = w_i^\top \beta$ and $h = \delta^\top d$. Then, 
\begin{multline*}
    P_\beta A_{f,g} h = \sum_{i=1}^M \sigma_i \psi_i \left\langle h,\phi_i\right\rangle_{\tilde{H}_{d}}\\
    \iff P_\beta A_{f,g} \delta^\top d = \sum_{i=1}^M \sigma_i w_i^\top \beta \left\langle \delta^\top d, v_i^\top d\right\rangle_{\tilde{H}_{d}}
\end{multline*}
Using the finite-rank representation, the collection $\{(\sigma_i,\phi_i,\psi_i)\}_{i=1}^M$, is an SVD of $P_\beta A_{f,g}|_d$, if for all $\delta\in\mathbb{R}^M$,
\begin{equation}\label{eq:suff_cond_SVD}
    \left(G_\beta^{+} G_d \delta\right)^\top \beta = \left(\sum_{i=1}^M \sigma_i \left\langle \delta^\top d, v_i^\top d\right\rangle_{\tilde{H}_{d}} w_i^\top \right)\beta.
\end{equation}
Simple matrix manipulations yield the chain of implications
\begin{gather*}
    {\thickmuskip=0mu\thinmuskip=0mu\medmuskip=0mu\eqref{eq:suff_cond_SVD}\impliedby \forall \delta\in\mathbb{R}^M, G_\beta^{+} G_d \delta = \sum_{i=1}^M \sigma_i \left\langle \delta^\top d, v_i^\top d\right\rangle_{\tilde{H}_{d}} w_i}\\
    \iff \forall \delta\in\mathbb{R}^M,G_\beta^{+} G_d \delta 
    = \sum_{i=1}^M \sigma_i  \left(w_i v_i^\top G_d\right) \delta\\
    \impliedby G_\beta^{+} G_d = \sum_{i=1}^M \sigma_i  \left(w_i v_i^\top \right)G_d\\
    \impliedby G_\beta^{+} = \sum_{i=1}^M \sigma_i w_i v_i^\top = W\Sigma V^\top,
\end{gather*}
which proves the proposition.
\begin{remark}
    The standard usage of the term SVD refers to a decomposition using orthonormal left and right singular vectors. The left singular functions $\{\phi_i\}_{i=1}^{M}$ and right singular functions $\{\psi_i\}_{i=1}^{M}$ are not necessarily orthonormal. Therefore, the decomposition in the previous proposition can not be properly called and SVD of the finite-rank operator $P_\beta A_{f,g}|_d$.
\end{remark}
\end{proof}
In the following section, the singular values and the left and right singular vectors are used, along with a finite truncation of \eqref{eq:infinite_spectral_reconstruction_svd} to generate a data-driven model.

\section{The SCLDMD Algorithm} \label{sec:SCLDMD}
Motivated by \eqref{eq:infinite_spectral_reconstruction_svd}, assuming that $h_{\mathrm{id},j} \in \tilde{H}_{d}$ for $j=1,\cdots,n$, the system dynamics are approximated using the finite-rank representation, with rank at most $M$, as 
\[
    \dot{x} \approx \hat{F}_{M} (x,u) := [P_{\beta} A_{f,g} P_{d} h_{\mathrm{id}}] (x) \begin{pmatrix}
        1\\u
    \end{pmatrix},
\]    
where $P_{\beta} A_{f,g} P_{d} h_{\mathrm{id}}$ denotes row-wise operation of the operator $P_{\beta} A_{f,g}$ on the function $P_{d} h_{\mathrm{id}}$. 
Using the definition of singular values and singular functions of $P_{\beta} A_{f,g}\mid_d$,
\begin{equation}
    \dot{x} \approx 
    \sum_{i=1}^M \sigma_i \xi_i w_i^\top \beta (x) \begin{pmatrix}
        1\\u
    \end{pmatrix} = \xi\Sigma W^\top \beta (x) \begin{pmatrix}
        1\\u
    \end{pmatrix},
\end{equation}
where $\xi_i \coloneqq \left\langle P_d h_{\mathrm{id}},\phi_i\right\rangle_{\tilde{H}_{d}}$ and $\xi := \begin{bmatrix}
    \xi_1,&\ldots,&\xi_M
\end{bmatrix}$.

The modes $\xi$ can be computed using $\phi_i = v_i^\top d$ as
\begin{gather*}
    \xi = \begin{bmatrix}
    \left\langle P_d h_{\mathrm{id},1},v_1^\top d\right\rangle_{\tilde{H}_{d}},&\ldots,&\left\langle P_d h_{\mathrm{id},1},v_M^\top d\right\rangle_{\tilde{H}_{d}}\\
    \vdots & \ddots & \vdots\\
\left\langle P_d h_{\mathrm{id},n},v_1^\top d\right\rangle_{\tilde{H}_{d}},&\ldots,&\left\langle P_d h_{\mathrm{id},n},v_M^\top d\right\rangle_{\tilde{H}_{d}}
\end{bmatrix}\\
= \begin{bmatrix}
    \left\langle \delta_1^\top d,d_1\right\rangle_{\tilde{H}_{d}},&\ldots,&\left\langle \delta_1^\top d,d_M\right\rangle_{\tilde{H}_{d}}\\
    \vdots & \ddots & \vdots\\
\left\langle \delta_n^\top d,d_1\right\rangle_{\tilde{H}_{d}},&\ldots,&\left\langle \delta_n^\top d,d_M\right\rangle_{\tilde{H}_{d}}
\end{bmatrix} V
= \delta^\top G_d V,
\end{gather*}
where $\delta \coloneqq  \begin{bmatrix} \delta_1, &\ldots, &\delta_n \end{bmatrix}$. Using the reproducing property of the reproducing kernel of $\tilde{H}_d$, the coefficients $\delta_i$ in the projection of $h_{\mathrm{id},i}$ onto $d$ satisfy
{\thinmuskip=0mu \thickmuskip=0mu \medmuskip=0mu \begin{equation*}
    G_d \delta_i = \begin{bmatrix}
        \left\langle\left(h_{\mathrm{id}}\right)_i,d_1\right\rangle_{\tilde{H}_{d}}\\ \vdots \\ \left\langle\left(h_{\mathrm{id}}\right)_i,d_M\right\rangle_{\tilde{H}_{d}}
    \end{bmatrix} = \begin{bmatrix}
        \left(\gamma_{u_1}(T_1)\right)_i - \left(\gamma_{u_1}(0)\right)_i\\ \vdots \\ \left(\gamma_{u_M}(T_M)\right)_i - \left(\gamma_{u_M}(0)\right)_i
    \end{bmatrix}.
\end{equation*}}
Letting $D := ((\gamma_{u_j}(T_j))-(\gamma_{u_j}(0))_i)_{i,j=1}^{n,M} $ it can be concluded that $ \delta^\top G_d = D$. Finally, the modes $\xi$ are given by $\xi = D V$ and the estimated open-loop model is given by
\begin{equation}\label{eq:convergent_closed_loop_model}
    \dot{x} \approx \hat{F}_{M}(x,u) = D V\Sigma W^\top \beta(x)\begin{pmatrix}
        1\\u
    \end{pmatrix} = D G_\beta^{+} \beta(x) \begin{pmatrix}
        1\\u
    \end{pmatrix}.
\end{equation}
The approximation $\hat{f}_M(x)$, an approximation of the drift dynamics, $f(x)$, is given by the first column of $D G_\beta^{+} \beta(x)$ and $\hat{g}_M(x)$, an approximation of the control-effectiveness matrix, $g(x)$, is given by the last $m$ columns of $D G_\beta^{+} \beta(x)$.

Since $P_{\beta} A_{f,g} P_{d}$ converges to $A_{f,g}$ in norm as $M\to\infty$, and since the space $F^2_{\tilde{\rho}_d}(\mathbb{R}^n)$ contains $h_{\mathrm{id},j}$ for $j=1,\cdots,n$, the following result is immediate.
\begin{corollary}\label{cor:uniform-convergence}
    Under the hypothesis of Theorem \ref{thm:norm-convergence}, for all $u\in\mathcal{U}$, where for all $t\in [0,T]$, $\Vert \begin{pmatrix}
       1 & u^\top(t) 
    \end{pmatrix}  \Vert <\overline{U}$, $\lim_{M\to\infty}\left(\sup_{x\in X}\left\Vert \hat{F}_{M}(x,u) - F(x,u) \right\Vert_{2}\right) = 0$.
\end{corollary}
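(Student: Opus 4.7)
The plan is to reduce the uniform convergence claim to the operator norm convergence established in Theorem \ref{thm:norm-convergence} by invoking the reproducing property of $H$, and then to exploit compactness of $X$ to obtain a uniform bound on the norms of the evaluation functionals.

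First, I would rewrite both the true and approximate dynamics in a common form. Applying $A_{f,g}$ to each coordinate of $h_{\mathrm{id}}$ gives $A_{f,g}(h_{\mathrm{id},j})(x) = e_j^\top \begin{pmatrix} f(x) & g(x) \end{pmatrix}$, so that $F(x,u)_j = A_{f,g}(h_{\mathrm{id},j})(x)\begin{pmatrix} 1 \\ u \end{pmatrix}$ and, analogously, $\hat{F}_M(x,u)_j = [P_\beta A_{f,g} P_d (h_{\mathrm{id},j})](x)\begin{pmatrix} 1 \\ u \end{pmatrix}$. The componentwise error is therefore
\[
    F(x,u)_j - \hat{F}_M(x,u)_j = \left[(A_{f,g} - P_\beta A_{f,g} P_d)(h_{\mathrm{id},j})\right](x)\begin{pmatrix} 1 \\ u \end{pmatrix}.
\]

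Second, I would invoke the vvRKHS reproducing property with the row vector $v = \begin{pmatrix} 1 & u^\top \end{pmatrix}$. Since $vK_x \in H$, the right-hand side above equals $\left\langle (A_{f,g} - P_\beta A_{f,g} P_d)(h_{\mathrm{id},j}),\, v K_x \right\rangle_H$, and Cauchy--Schwarz yields
\[
    \left\vert F(x,u)_j - \hat{F}_M(x,u)_j \right\vert \leq \left\Vert A_{f,g} - P_\beta A_{f,g} P_d \right\Vert_{\tilde{H}_d}^{H} \left\Vert h_{\mathrm{id},j} \right\Vert_{\tilde{H}_d} \left\Vert v K_x \right\Vert_H.
\]
Each coordinate function $h_{\mathrm{id},j}$ is a monomial and hence belongs to $\tilde{H}_d = F^2_{\tilde{\rho}_d}(\mathbb{R}^n)$ with a fixed finite norm, while the operator-norm factor converges to zero by Theorem \ref{thm:norm-convergence}.

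Third, to promote the pointwise bound to a uniform bound, I would use continuity of the reproducing kernel of the Bargmann--Fock vvRKHS $H = F^2_\rho(\mathbb{R}^n)$ in its arguments, so that $\sup_{x \in X,\, \|v\| \leq \overline{U}} \|v K_x\|_H$ is finite on any compact $X$. Summing the $n$ componentwise bounds yields a uniform estimate of the form $\sup_{x \in X} \|\hat{F}_M(x,u) - F(x,u)\|_2 \leq C \|A_{f,g} - P_\beta A_{f,g} P_d\|_{\tilde{H}_d}^{H}$ for a constant $C$ depending only on $n$, $\overline{U}$, and $X$; Theorem \ref{thm:norm-convergence} then sends the right-hand side to zero.

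The main obstacle will be confirming that $\|v K_x\|_H$ is indeed uniformly bounded over $x \in X$ and admissible $v$ with $\|v\| \leq \overline{U}$. This hinges on continuity of the reproducing kernel operator for the vector-valued Bargmann--Fock construction; it is a routine fact for the exponential dot product kernel, but some care is needed when assembling $H = F^2_\rho(\mathbb{R}^n)$ with possibly distinct parameters $\rho_i$ across output coordinates. A secondary subtlety is the implicit assumption that $X$ is a fixed compact subset of $\mathbb{R}^n$ rather than all of $\mathbb{R}^n$, since $\|K_x\|_H$ grows without bound as $\|x\|\to\infty$ in a Bargmann--Fock space.
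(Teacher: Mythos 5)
Your proposal is correct and follows essentially the same route as the paper's proof: component-wise reduction to the operator-norm convergence of Theorem \ref{thm:norm-convergence}, the reproducing property with the row vector $\begin{pmatrix}1 & u^\top\end{pmatrix}$, Cauchy--Schwarz, and a uniform bound on $\left\Vert \begin{pmatrix}1 & u^\top\end{pmatrix} K_x\right\Vert_H^2 = \begin{pmatrix}1 & u^\top\end{pmatrix}K(x,x)\begin{pmatrix}1 \\ u\end{pmatrix}$ obtained from continuity of $x\mapsto K(x,x)$ on the compact set $X$ together with the bound $\overline{U}$ on the augmented input. The concerns you flag at the end (uniform boundedness of the evaluation functionals and compactness of $X$) are exactly the points the paper handles via the constant $\overline{K}$, so nothing is missing.
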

\begin{proof}

   Since the space $F^2_{\tilde{\rho}_d}(\mathbb{R}^n)$ contains $h_{\mathrm{id},j}$ for $j=1,\cdots,n$, the functions $P_{\beta^j} A_{f,g} P_{d^j} h_{\mathrm{id},j} $ and $ A_{f,g} h_{\mathrm{id},j} $ exist as members of $H$. Since  $x\mapsto K(x,x) = \diag\begin{pmatrix}
       \exp\left(\frac{x^\top x}{\tilde{\rho}_1}\right)&\ldots&\exp\left(\frac{x^\top x}{\tilde{\rho}_{m+1}}\right)
   \end{pmatrix}$ is continuous and $X$ is compact, there exists a real number $\overline{K}$ such that $\sup_{x\in X}\left\Vert K(x,x)\right\Vert_{\mathcal{Y}}^{\mathcal{Y}} = \overline{K}$.

Let $Y(x)=\begin{pmatrix}
    f(x) &g(x)
\end{pmatrix}\in \mathbb{R}^{n \times (m+1)}$, and $\hat{Y}^M(x)=\begin{pmatrix}
    \hat{f}^M(x) &\hat{g}^M(x)\end{pmatrix}=DV\Sigma W^\top \beta(x) 
\in \mathbb{R}^{n \times (m+1)}$ is the identified system using the finite-rank representation with rank at-most $M$, and $Y_j(x)$ is the $j^{th}$ row of $Y(x)$ .
   
Theorem \ref{thm:norm-convergence} can then be used to conclude that for all $\epsilon > 0$, $j=1,\ldots,n$, there exists $M(j) \in \mathbb{N}$ such that for all $i\geq M(j)$, $\left\Vert \hat{Y}^i_{j} - Y_{j}\right\Vert_{H}^{2} \leq \frac{\epsilon^2}{\overline{U}^2 \overline{K}}$. Using the reproducing property, for $i\geq \overline{M} \coloneqq \max_{j} M(j)$, 
\begin{multline*}
     \left((F(x,u))_j-(\hat{F}_M(x,u))_j\right)^2=\\
     \left\langle (\hat{Y}^i_{j}(x) - Y_{j}(x)),\begin{pmatrix}
         1 & u^\top
     \end{pmatrix} \right\rangle_{\mathcal{Y}}^{2}=\\
     \left\langle (\hat{Y}^i_{j} - Y_{j}), k_{x,\begin{pmatrix}
         1 & u^\top
     \end{pmatrix}} \right\rangle_{H}^{2} \leq\\
     \left\Vert \hat{Y}^i_{j} - Y_{j} \right\Vert_{H}^{2} \left\Vert k_{x,\begin{pmatrix}
         1 & u^\top
     \end{pmatrix}}\right\Vert_{H}^{2} =\\
    \left\Vert \hat{Y}^i_{j} - Y_{j} \right\Vert_{H}^{2}      \left\langle k_{x,\begin{pmatrix}
         1 & u^\top
     \end{pmatrix}}, k_{x,\begin{pmatrix}
         1 & u^\top
     \end{pmatrix}} \right\rangle_{H}=\\
       \left\Vert \hat{Y}^i_{j} - Y_{j} \right\Vert_{H}^{2}      \left\langle k_{x}\begin{pmatrix}
         1 & u^\top
     \end{pmatrix}, k_{x}\begin{pmatrix}
         1 & u^\top
     \end{pmatrix} \right\rangle_{H}=\\
       \left\Vert \hat{Y}^i_{j} - Y_{j} \right\Vert_{H}^{2}      \left\langle \begin{pmatrix}
         1 & u^\top
     \end{pmatrix}, k^*_{x}k_{x}\begin{pmatrix}
         1 & u^\top
     \end{pmatrix} \right\rangle_{\mathcal{Y}}=\\
     \left\Vert \hat{Y}^i_{j} - Y_{j} \right\Vert_{H}^{2}      \left\langle \begin{pmatrix}
         1 & u^\top
     \end{pmatrix}, K(x,x)\begin{pmatrix}
         1 & u^\top
     \end{pmatrix} \right\rangle_{\mathcal{Y}}=\\
     \left\Vert \hat{Y}^i_{j} - Y_{j} \right\Vert_{H}^{2} \begin{pmatrix}
         1 & u^\top
     \end{pmatrix} K(x,x)\begin{pmatrix}
         1 \\ u
     \end{pmatrix} \leq \\
     \left\Vert \hat{Y}^i_{j} - Y_{j} \right\Vert_{H}^{2} \overline{K}\left\Vert\begin{pmatrix}
         1 \\ u
     \end{pmatrix}\right\Vert^2 \leq \\
     \frac{\epsilon^2}{\overline{U}^2 \overline{K}} \overline{U}^2 \overline{K} = \epsilon^2\\
\end{multline*}
As a result, for all $\epsilon \geq 0$  there exists $\overline{M}$ such that for all $i\geq \overline{M}$,
\begin{multline*}
    \sup_{x\in X}\left\Vert \hat{F}_{M}(x,u) - F(x,u) \right\Vert_{2}= \\
    \sup_{x\in X} \left\Vert (\hat{Y}^{M}(x) - Y(x)) \begin{pmatrix}
        1\\u
    \end{pmatrix} \right\Vert_{2} \leq  \epsilon,
\end{multline*}
which completes the proof.
\end{proof}
The SCLDMD technique is summarized in Algorithm \ref{alg:SCLDMD}. The characterization $ \Gamma_{\gamma_{u_j}} = \int_0^{T_j} \tilde{K}\left(\cdot,\gamma_{u_j}(t)\right) \mathrm{d}t$ of occupation kernels is used on line \ref{line:psi}.
The inner product of two control occupation kernels is given in \cite{rosenfeld2024dynamicmodedecompositioncontrol} as
\begin{multline}
    \left\langle \Gamma_{\gamma_{u_i},u_i}, \Gamma_{\gamma_{u_j},u_j} \right\rangle_{H}\\
    =\int\limits_{0}^{T_j}\int\limits_{0}^{T_i} \begin{bmatrix}1 & u_i^{\top}(\tau)\end{bmatrix}  K\left(\gamma_{u_j}(t),\gamma_{u_i}(\tau)\right)\begin{bmatrix}
    1 \\ u_j(t)
    \end{bmatrix}\mathrm{d}\tau \mathrm{d}t.\label{eq:Control_occ_ker_gram_matrix}
\end{multline}
\begin{algorithm}
    \caption{\label{alg:SCLDMD}The SCLDMD algorithm}
    \begin{algorithmic}[1]
        \renewcommand{\algorithmicrequire}{\textbf{Input:}}
        \renewcommand{\algorithmicensure}{\textbf{Output:}}
        \REQUIRE Trajectories $\{\gamma_{u_i}\}_{i=1}^{M}$, a control signal $u$, a numerical integration procedure, and reproducing  kernels $\tilde{K}_d$ and $K$ of $\tilde{H}_d$ and $H$, respectively.
        \ENSURE $\{\xi_j,\sigma_j,\varphi_j,\phi_j\}_{j=1}^{M}$
        \STATE $G_\beta \leftarrow \left(\left\langle \Gamma_{\gamma_{u_i},u_i}, \Gamma_{\gamma_{u_j},u_j} \right\rangle_{H} \right)_{i,j=1}^M$ (see \eqref{eq:Control_occ_ker_gram_matrix})
        \STATE $ D \leftarrow \left( \left(\gamma_{u_j}(T_j)\right)_i - \left(\gamma_{u_j}(0)\right)_i\right)_{i,j=1}^{n,M}$
        \STATE $(W,\Sigma,V)\leftarrow$ SVD of $ G_\beta^{+} $
        \STATE $\xi \leftarrow DV$
        \STATE $\phi_j \leftarrow \sum_{i=1}^M (V)_{i,j} \left(K_d(\cdot,\gamma_{u_i}(T_i)) - K_d(\cdot,\gamma_{u_i}(0))\right)$
        \STATE $\psi_j \leftarrow \sum_{i=1}^M \int_0^{T_i} (W)_{i,j}\left[\begin{bmatrix} 1 & u_i(t)^{\top} \end{bmatrix} K_{\gamma_{u_i}\left(t\right)}\right]\left(\cdot\right) \mathrm{d}t$\label{line:psi}
        \RETURN $\{\xi_j,\sigma_j,\varphi_j,\phi_j\}_{j=1}^{M}$ 
    \end{algorithmic} 
\end{algorithm}

\section{Numerical Experiments}\label{sec:NumExp}
The purpose of the numerical experiment is to demonstrate the efficacy of the SCLDMD algorithm.

This experiment utilizes the nonlinear model of the Duffing oscillator given by 
\begin{equation}
\label{eq:Duffing_oscilator}
\dot{x}=f(x)+g(x)u=\begin{pmatrix} x_2\\ x_1-x_1^3\end{pmatrix}+\begin{pmatrix} 0\\ 2+\sin(x_1)\end{pmatrix}u,
\end{equation}
where $f(x)=\begin{pmatrix} x_2\\ x_1-x_1^3\end{pmatrix}$ is the drift function, $g(x)=\begin{pmatrix} 0\\ 2+\sin(x_1)\end{pmatrix}$ is the control effectiveness function, and $u$ is the controller. To approximate the system dynamics, 225 trajectories of the system are recorded, along with the corresponding control signals, starting from a grid of initial conditions, under a control signal that is composed of the sum of 15 sinusoidal signals with randomly generated frequencies between 1 and 3 radians per second, randomly generated phases between -1 and 1 radians, and randomly generated coefficients between -1 and 1.  The recorded trajectories and control signals, which are stored with a time step of 0.05 second and a duration of 1 second, are then utilized to construct the estimates $\hat{f}_M$ and $\hat{g}_M$ of $f$ and $g$, respectively. The grid of initial conditions is composed of 15 points in the $x_1$ coordinate and 15 points in the $x_2$ coordinate, equally spaced in the interval $[-3,3]$ in each coordinate. The exponential dot product kernel $\tilde{K}_d(x,y)=\exp(\frac{x^\top y}{\mu})$ is used to define 
$\tilde{H}_d$ with $\mu=11$, and the kernel operator $K$ is selected to be a diagonal matrix of exponential dot product kernels with the parameter $\mu_v=10$. Simpson's 1/3 rule is used for numerical integration.

\begin{figure}
   \centering

    \begin{tikzpicture}
    \begin{axis}[
        xlabel={Time (s)},
        legend pos = south west,
        enlarge y limits=0.1,
        enlarge x limits=0,
        height = 0.7\columnwidth,
        width = 1\columnwidth,
        label style={font=\scriptsize},
        tick label style={font=\scriptsize},
        ylabel shift = -8 pt,
    ]
        \addplot [thick, blue] table [x index=0, y index=1]{results/DuffingSCLDMDReconstruction.dat};
        \addplot [thick, red] table [x index=0, y index=2]{results/DuffingSCLDMDReconstruction.dat};
        \addplot [dashed, green] table [x index=0, y index=3]{results/DuffingSCLDMDReconstruction.dat};
        \addplot [dashed, black] table [x index=0, y index=4]{results/DuffingSCLDMDReconstruction.dat};
        \legend{$x_1(t)$, $x_2(t)$,$\hat{x}_1(t)$,$\hat{x}_2(t)$}

    \end{axis}
\end{tikzpicture}
    \caption{The responses of the true Duffing oscillator and the identified duffing oscillator to the input $u(t)=\sin(t) + \cos(2t)$.} \label{fig:duffing_response_approx_vs_actual}
\end{figure}
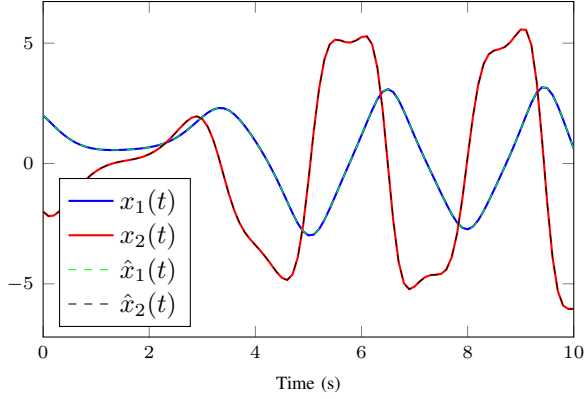

\begin{figure}
   \centering

    \begin{tikzpicture}
    \begin{axis}[
        xlabel={Time (s)},
        legend pos = north west,
        enlarge y limits=0.1,
        enlarge x limits=0,
        height = 0.7\columnwidth,
        width = 1\columnwidth,
        label style={font=\scriptsize},
        tick label style={font=\scriptsize},
        ylabel shift = -8 pt,
    ]
        \addplot [thick, blue] table [x index=0, y index=1]{results/DuffingSCLDMDError.dat};
        \addplot [thick, red] table [x index=0, y index=2]{results/DuffingSCLDMDError.dat};
        \legend{$x_1(t)-\hat{x}_1(t)$, $x_2(t)-\hat{x}_2(t)$}

    \end{axis}
\end{tikzpicture}
    \caption{The difference in responses of the true Duffing oscillator and the identified duffing oscillator to the input $u(t)=\sin(t) + \cos(2t)$.} \label{fig:duffing_error_approx_minus_actual}
\end{figure}

\begin{figure}[ht]
    \centering
    \begin{tikzpicture}
    \begin{axis}[
        width=0.55\columnwidth,
        colormap/viridis,
        xlabel={$x_1$},
        ylabel={$x_2$},
        zlabel={Error},
        title style={font=\scriptsize},
        title={$\tilde{f}$},
        zmax=0.0002,
        label style={font=\scriptsize},
        ylabel style={yshift=0.3cm,xshift=0.25cm},
        xlabel style={yshift=0.3cm,xshift=-0.25cm},
        tick label style={font=\scriptsize},
        view={-35}{25},
        enlarge y limits=0,
        enlarge x limits=0,
        enlarge z limits=0.01]
    \addplot3[surf, mesh/rows=9, shader=interp] table [] {results/DuffingVectorFieldfError.dat};
    \end{axis}
    \end{tikzpicture}\hspace{-0.5em}
    \begin{tikzpicture}
    \begin{axis}[
        width=0.55\columnwidth,
        colormap/viridis,
        xlabel={$x_1$},
        ylabel={$x_2$},
        zlabel={Error},
        title style={font=\scriptsize},
        title={$\tilde{g}$},
        zmax=0.0002,
        label style={font=\scriptsize},
        ylabel style={yshift=0.3cm,xshift=0.25cm},
        xlabel style={yshift=0.3cm,xshift=-0.25cm},
        tick label style={font=\scriptsize},
        view={-35}{25},
        enlarge y limits=0,
        enlarge x limits=0,
        enlarge z limits=0.01]
    \addplot3[surf, mesh/rows=9, shader=interp] table [] {results/DuffingVectorFieldgError.dat};
    \end{axis}
    \end{tikzpicture}
    \caption{Error $ \left\Vert f(x) - \hat{f}(x) \right\Vert$ as a function of $x$ (left) and error $ \left\Vert g(x) - \hat{g}(x) \right\Vert$ as a function of $x$ (right).}
    \label{fig:vector_field_error}
\end{figure}
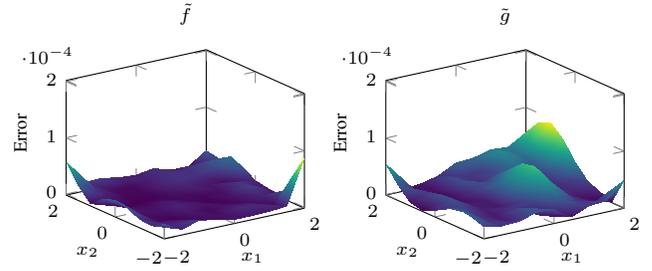

Figure \ref{fig:duffing_response_approx_vs_actual} shows the responses of the true Duffing oscillator from (\ref{eq:Duffing_oscilator}) and of the Duffing oscillator identified using SCLDMD, starting from the initial condition of $[2,-2]$ under the input $u(t)=\sin(t)+\cos(2t)$ for 10 seconds. The states of the identified model, $\hat{x}_1$ and $\hat{x}_2$, track closely the states $x_1$ and $x_2$ of (\ref{eq:Duffing_oscilator}). Figure \ref{fig:duffing_error_approx_minus_actual} shows the difference between the trajectories of the actual system and the  identified model, which is in the order of $10^{-3}$. The maximum error percentage for $\hat{x}_2$, which is the higher of the two, is 0.001\%.

Figure \ref{fig:vector_field_error} shows the vector field estimation errors for both the drift dynamics, $\tilde{f}$, and the control effectiveness, $\tilde{g}$, as functions of $x_1$ and $x_2$ in the domain $(x_1,x_2) \in [-2,2] \times  [-2,2]$. Both errors, $\tilde{f}$ and $\tilde{g}$ are of the order $10^{-4}$ in the above mentioned domain, where the error increases near the edges of the domain. Although $g$ from (\ref{eq:Duffing_oscilator}) is only a function of $x_1$, in Figure \ref{fig:vector_field_error} it can be seen that $\hat{g}$ changes with $x_2$. This is expected since the estimate $\hat{g}$ obtained through SCLDMD is a function of both states, given that the SCLDMD algorithm is not informed of the state dependencies of the control effectiveness. Exploring the incorporation of partial knowledge of the system to improve SCLDMD is part of future work.

\section{Conclusion} \label{sec:concl}
This paper introduces a novel approach towards the construction of a finite-rank representation of the control Liouville operator. New results on the construction of singular values and functions of the finite rank operator using singular values and vectors of a matrix representation are also obtained. Once the singular values and functions are at hand, the  drift dynamics and the control effectiveness can be approximated, which facilitates the prediction of the states of the dynamical system in response to an open-loop control signal. Moreover, the finite-rank representation of the control Liouville operator is shown to be convergent (in norm) to the true control Liouville operator, provided sufficiently rich data are available.

A numerical experiment using the Duffing oscillator is presented to demonstrate the performance of the developed technique. Analysis of the effects of integration error and measurement noise on the SCLDMD technique, as well as incorporating partial knowledge of the system to improve the identification technique are part of future work.

\bibliographystyle{IEEEtran}
\bibliography{scc,sccmaster,scctemp,refrences}

\begin{thebibliography}{10}
\providecommand{\url}[1]{#1}
\csname url@samestyle\endcsname
\providecommand{\newblock}{\relax}
\providecommand{\bibinfo}[2]{#2}
\providecommand{\BIBentrySTDinterwordspacing}{\spaceskip=0pt\relax}
\providecommand{\BIBentryALTinterwordstretchfactor}{4}
\providecommand{\BIBentryALTinterwordspacing}{\spaceskip=\fontdimen2\font plus
\BIBentryALTinterwordstretchfactor\fontdimen3\font minus \fontdimen4\font\relax}
\providecommand{\BIBforeignlanguage}[2]{{%
\expandafter\ifx\csname l@#1\endcsname\relax
\typeout{** WARNING: IEEEtran.bst: No hyphenation pattern has been}%
\typeout{** loaded for the language `#1'. Using the pattern for}%
\typeout{** the default language instead.}%
\else
\language=\csname l@#1\endcsname
\fi
#2}}
\providecommand{\BIBdecl}{\relax}
\BIBdecl

\bibitem{SCC.Kutz.Brunton.ea2016}
J.~N. Kutz, S.~L. Brunton, B.~W. Brunton, and J.~L. Proctor, \emph{Dynanic mode decomposition - data-driven modeling of complex systems}.\hskip 1em plus 0.5em minus 0.4em\relax Philadelphia, PA: Society for Industrial and Applied Mathematics, 2016.

\bibitem{SCC.Budisic.Mohr.ea2012}
M.~Budi{\v{s}}i{\'c}, R.~Mohr, and I.~Mezi{\'c}, ``Applied {K}oopmanism,'' \emph{Chaos: An Interdisciplinary Journal of Nonlinear Science}, vol.~22, no.~4, p. 047510, 2012.

\bibitem{SCC.Mezic2005}
I.~Mezi{\'c}, ``Spectral properties of dynamical systems, model reduction and decompositions,'' \emph{Nonlinear Dyn.}, vol.~41, no.~1, pp. 309--325, 2005.

\bibitem{SCC.Korda.Mezic2018}
M.~Korda and I.~Mezi{\'c}, ``On convergence of extended dynamic mode decomposition to the {K}oopman operator,'' \emph{J. Nonlinear Sci.}, vol.~28, no.~2, pp. 687--710, 2018.

\bibitem{SCC.Pedersen2012}
G.~K. Pedersen, \emph{Analysis now}.\hskip 1em plus 0.5em minus 0.4em\relax Springer Science \& Business Media, 2012, vol. 118.

\bibitem{pillonetto2014kernel}
G.~Pillonetto, F.~Dinuzzo, T.~Chen, G.~De~Nicolao, and L.~Ljung, ``Kernel methods in system identification, machine learning and function estimation: A survey,'' \emph{Automatica}, vol.~50, no.~3, pp. 657--682, 2014.

\bibitem{pillonetto2010new}
G.~Pillonetto and G.~De~Nicolao, ``A new kernel-based approach for linear system identification,'' \emph{Automatica}, vol.~46, no.~1, pp. 81--93, 2010.

\bibitem{pillonetto2011new}
G.~Pillonetto, M.~H. Quang, and A.~Chiuso, ``A new kernel-based approach for nonlinearsystem identification,'' \emph{IEEE Transactions on Automatic Control}, vol.~56, no.~12, pp. 2825--2840, 2011.

\bibitem{pillonetto2015tuning}
G.~Pillonetto and A.~Chiuso, ``Tuning complexity in regularized kernel-based regression and linear system identification: The robustness of the marginal likelihood estimator,'' \emph{Automatica}, vol.~58, pp. 106--117, 2015.

\bibitem{SCC.Abudia.Rosenfeld.ea2024}
M.~Abudia, J.~A. Rosenfeld, and R.~Kamalapurkar, ``On convergent dynamic mode decomposition and its equivalence with occupation kernel regression,'' in \emph{IFAC-PapersOnLine}, 2024, to appear.

\bibitem{SCC.Proctor.Brunton.ea2016}
J.~L. Proctor, S.~L. Brunton, and J.~N. Kutz, ``Dynamic mode decomposition with control,'' \emph{SIAM J. Appl. Dyn. Syst.}, vol.~15, no.~1, pp. 142--161, 2016.

\bibitem{SCC.Korda.Mezic2018a}
M.~Korda and I.~Mezi{\'c}, ``Linear predictors for nonlinear dynamical systems: {K}oopman operator meets model predictive control,'' \emph{Automatica}, vol.~93, pp. 149--160, 2018.

\bibitem{STRASSER20232257}
\BIBentryALTinterwordspacing
R.~Strässer, J.~Berberich, and F.~Allgöwer, ``Robust data-driven control for nonlinear systems using the koopman operator*,'' \emph{IFAC-PapersOnLine}, vol.~56, no.~2, pp. 2257--2262, 2023, 22nd IFAC World Congress.
\BIBentrySTDinterwordspacing

\bibitem{SCC.Surana2016}
A.~Surana, ``{K}oopman operator based observer synthesis for control-affine nonlinear systems,'' in \emph{Proc. IEEE Conf. Decis. Control}.\hskip 1em plus 0.5em minus 0.4em\relax {IEEE}, 2016, pp. 6492--6499.

\bibitem{SCC.Goswami.Paley2022}
D.~Goswami and D.~A. Paley, ``Bilinearization, reachability, and optimal control of control-affine nonlinear systems: a {K}oopman spectral approach,'' \emph{{IEEE} Trans. Autom. Control}, vol.~67, no.~6, pp. 2715--2728, 2022.

\bibitem{strasser2024koopman}
R.~Str{\"a}sser, M.~Schaller, K.~Worthmann, J.~Berberich, and F.~Allg{\"o}wer, ``Koopman-based feedback design with stability guarantees,'' \emph{IEEE Transactions on Automatic Control}, 2024.

\bibitem{SCC.Rosenfeld.Kamalapurkar2021}
\BIBentryALTinterwordspacing
J.~A. Rosenfeld and R.~Kamalapurkar, ``Dynamic mode decomposition with control {L}iouville operators,'' in \emph{IFAC-PapersOnLine}, vol.~54, no.~9, 2021, pp. 707--712.
\BIBentrySTDinterwordspacing

\bibitem{SCC.Rosenfeld.Russo.ea2024}
\BIBentryALTinterwordspacing
J.~A. Rosenfeld, B.~Russo, R.~Kamalapurkar, and T.~Johnson, ``The occupation kernel method for nonlinear system identification,'' \emph{SIAM J. Control Optim.}, vol.~62, no.~3, pp. 1643--1668, 2024.
\BIBentrySTDinterwordspacing

\bibitem{SCC.Abudia.Channagiri.eainprep}
M.~Abudia, T.~Channagiri, J.~A. Rosenfeld, and R.~Kamalapurkar, ``Control occupation kernel regression for nonlinear control-affine systems,'' {arXiv:2106.00103}, in preparation.

\bibitem{SCC.Brunton.Proctor.ea2016}
\BIBentryALTinterwordspacing
S.~L. Brunton, J.~L. Proctor, and J.~N. Kutz, ``Sparse identification of nonlinear dynamics with control {(SINDYc)},'' \emph{IFAC-PapersOnLine}, vol.~49, no.~18, pp. 710--715, 2016.
\BIBentrySTDinterwordspacing

\bibitem{rosenfeld2024dynamicmodedecompositioncontrol}
\BIBentryALTinterwordspacing
J.~A. Rosenfeld and R.~Kamalapurkar, ``Dynamic mode decomposition with control liouville operators,'' 2024. [Online]. Available: \url{https://arxiv.org/abs/2101.02620}
\BIBentrySTDinterwordspacing

\bibitem{SCC.Rosenfeld.Kamalapurkar.ea2022}
\BIBentryALTinterwordspacing
J.~A. Rosenfeld, R.~Kamalapurkar, L.~F. Gruss, and T.~T. Johnson, ``Dynamic mode decomposition for continuous time systems with the {L}iouville operator,'' \emph{J. Nonlinear Sci.}, vol.~32, no.~1, pp. 1--30, 2022.
\BIBentrySTDinterwordspacing

\end{thebibliography}
\end{document}